\newtheorem{theorem}{Theorem}
\newtheorem{Corollary}{Corollary}
\newtheorem{definition}{Definition}
\newtheorem{proposition}{Proposition}
\renewcommand{\phi}{\varphi}
\title{Are Large Random Graphs Always Safe to Hide?}
\author{Sourav Chakraborty
\institute{Indian Statistical Institute}
\email{sourav@isical.ac.in}
\and
Sujata Ghosh
\institute{Indian Statistical Institute}
\email{sujata@isichennai.res.in}
\and
Smiha Samanta
\institute{Indian Statistical Institute}
\email{smi1995ha@gmail.com}
}
\begin{document}
\maketitle


\begin{abstract}
    We discuss winning possibilities of players in various variants of cops and robber game played on large random graphs, a testbed for various kinds of network queries, search problems in particular. We explore the use of logic frameworks to investigate such results; in particular, we show that whenever a winning condition for either player can be expressed as a certain kind of formula in first-order logic, that player almost always wins. In the process, we obtain more insight into the logic-game connection from the zero-one law perspective.
\end{abstract}


\section{Introduction}

We begin with describing the cops and robber game \cite{cop-robber-book} in detail, which is the focus of the current study. In the cops and robber game, we have a finite ($\geq 1$) number, say $k$, of cops looking for a robber on a game arena, provided by a graph structure. The players occupy vertices of the graph, and they move along the edges to adjacent vertices. We assume that the underlying graph is reflexive and, accordingly, staying at her vertex can also be considered as a move for the player. This is a turn-based game with the cops and the robber occupying certain vertices and moving alternately, and in their turn, the cops move simultaneously, where some may stay put on their vertices. The game is played in a countable sequence of rounds. 
The cops win if after some finite number of rounds, at least one of the cops can occupy the same vertex as the robber. The robber wins if he can avoid the cops in an infinite run. For this work, as is the traditional case, players are assumed to have perfect information in these games.

The cops and robber games that are played on graphs are proposed in \cite{quilliot1978,NOWAKOWSKI1983,AIGNER1984}, and since then they have been extensively studied in various forms from distinct viewpoints. From the algorithmic and combinatorial perspectives, the central point of the study is the \emph{cop number} of the game, that is, the minimum number of cops required to win a game. For a detailed summary of these studies, see \cite{cop-robber-book}. In the study of robotic systems, search missions in a pursuit-evasion environment, in particular adversarial search problems, have been a primary object for investigations \cite{p-e1,p-e2}. In recent years, logicians have studied the hide and seek game, which shares a similar pursuit-evasion structure, with a focus on modeling the dynamic interaction between players \cite{graphgame,LHS-journal}. 

These games, played on graphs, occur in many natural contexts where the underlying graphs are generated by some stochastic processes; for example, see \cite{stojakovi2006}. In \cite{sabotage-random}, sabotage games (introduced in \cite{vanbenthem2005}), played between a traveler and a demon, are investigated to check whether they are biased towards the traveler (who travels along the edges of the graph to reach her goal vertex) or the saboteur (who removes edges from the graph to prevent the traveler from reaching her goal), and it is found that the game is indeed biased towards the traveler in large random graphs. In this work, we investigate the cops and robber game to check whether the game conditions favor the cops or the robber when the size of the underlying graph structure (number of vertices) becomes larger and larger. Since we do not have any control over the occurrence of edges in the graphs, we consider random graphs \cite{erdos1959random,gilbert} with constant and varying edge probabilities. Thus, we deal with cops and robber games played on random graphs. 

We should note here that from the algorithmic perspectives (checking the cop number of the game), these games have already been studied with respect to random graphs (e.g., see \cite{bollobas}). Our goal here is to highlight how the notion of winning in these games is skewed towards one of the players when we consider these graphs as the corresponding game arenas. These results can be used for further developments with respect to the design and expansion of relevant networks. The general interest in studies featuring large networks has increased manifold, and the present study may help us  deal with pursuit-evasion problems in randomly generated networks. Our intuition tells us that the larger the graph size is, the easier it should be for the robber to hide inside. We will see below whether the formal results agree with our intuitions. In addition to the traditional cops and robber game, we also study various variants of the game played on random graphs to explore how the bias that we discussed earlier, varies in terms of these variants.

 The proofs that we have here are simple applications of various known results relating random graphs and first-order logic. Our main goal is to showcase the elegant connection between first-order logic, probability theory and graph games. We believe that this connection is worth-pursuing over various kinds of random models and intend to provide a glimpse of this rich field of study. Our proof techniques rely on the same for showing zero-one laws in first-order logic over graphs \cite{Rfagin1976}, and we first give a brief primer on those results.


\section{On zero-one laws in first-order logic}

The main idea behind the zero-one law, satisfied by some logic, is the feature that various properties that are expressible in the logic under consideration can be either \emph{almost surely true} or \emph{almost surely false}. In the following, we show that the first-order theory of graphs, whose vocabulary consists only of a binary relation symbol, satisfies the zero-one law. Let us first rigorously define what we mean by `almost surely', following \cite{fmt}.

Let $\mathit{GR_n}$ denote the set of all graphs with $n$ vertices given by $\{0, 1, \ldots, n-1\}$. Then, the number of possible graphs on these $n$ vertices is given by: $\vert\mathit{GR_n}\vert = 2^{\binom{n}{2}}.$ 

\begin{definition}
Let $\varphi$ denote some first-order formula expressing some property of the graphs, say $\mathsf{P}$. We define:
$$\mu_n(\varphi) = \frac{\vert\{G\in\mathit{GR_n} : G\;\textrm{satisfies}\;\varphi\}\vert}{\vert \mathit{GR_n} \vert}.$$
Thus $\mu_n(\varphi)$ is the probability that a randomly chosen graph on the set of nodes $\{0, 1, \ldots, n-1\}$ satisfies the property expressed by $\varphi$. 
We now deine the asymptotic probability of $\varphi$ as: $$\mu(\varphi) : =\lim\limits_{n\to\infty} \mu_n(\varphi).$$ 
\end{definition}

In the same way, without going into the underlying logic per se, we can define $\mu_n(\mathsf{P})$ and $\mu(\mathsf{P}).$ Let us now establish the following well-known result from first-order logic. We show this for the first-order language of graphs, $\mathcal{L}_{G}$. To begin with, we have the following definition of \emph{almost surely} true or false formulas.

\begin{definition}
A formula $\varphi \in \mathcal{L}_{G}$ is said to be \emph{almost surely true} if $\mu(\varphi) = 1$, and is said to be \emph{almost surely false} if $\mu(\varphi) = 0$.
\end{definition}

Accordingly, we have the zero-one law for the first-order logic of graphs. This is a well-established result; however, for a better understanding of the later results, we provide a proof idea below.

\begin{theorem}[\cite{Rfagin1976}]\label{fol}
    Any sentence in first-order logic with only a binary relation symbol in the vocabulary is either almost surely true or almost surely false.
\end{theorem}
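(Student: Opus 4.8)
The plan is to prove the zero-one law via the classical route through the \emph{extension axioms} (sometimes called the Gaifman or Fagin axioms). First I would, for each pair of natural numbers $r \leq s$, write down a first-order sentence $\sigma_{r,s}$ asserting: for every choice of $r$ distinct vertices $a_1,\dots,a_r$ and every choice of $s$ further distinct vertices $b_1,\dots,b_s$ (disjoint from the $a_i$), there exists a vertex $z$, distinct from all of them, that is adjacent to every $a_i$ and non-adjacent to every $b_j$. These are genuine $\mathcal{L}_G$-sentences (they quantify over vertices only, using the single binary symbol), and the collection $T_{\mathrm{ext}} = \{\sigma_{r,s} : r,s \in \mathbb{N}\}$ is the theory whose models are the ``generic'' graphs.

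The proof then has three steps. \textbf{Step 1: each extension axiom is almost surely true}, i.e.\ $\mu(\sigma_{r,s}) = 1$. For a fixed tuple $a_1,\dots,a_r,b_1,\dots,b_s$, a uniformly random vertex $z$ among the remaining $n-(r+s)$ vertices fails the adjacency requirement with probability $1 - 2^{-(r+s)}$ (each of the $r+s$ required edge/non-edge events is independent with probability $1/2$), so the probability that \emph{all} remaining vertices fail is $(1 - 2^{-(r+s)})^{\,n-(r+s)} \to 0$. A union bound over the at most $n^{r+s}$ choices of the tuple still goes to $0$, since polynomial growth is killed by the exponential decay; hence $\mu_n(\sigma_{r,s}) \to 1$. \textbf{Step 2: $T_{\mathrm{ext}}$ is complete}, i.e.\ it decides every $\mathcal{L}_G$-sentence. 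This is the Ehrenfeucht--Fra\"iss\'e / back-and-forth argument: $T_{\mathrm{ext}}$ is $\aleph_0$-categorical (any two countable models are isomorphic, built by a back-and-forth using the extension axioms to realize each finite partial isomorphism), and it has no finite models, so by the \L o\'s--Vaught test it is complete. \textbf{Step 3: conclude.} Let $\varphi$ be any $\mathcal{L}_G$-sentence. By completeness, either $T_{\mathrm{ext}} \vdash \varphi$ or $T_{\mathrm{ext}} \vdash \neg\varphi$; by compactness the derivation uses only finitely many axioms $\sigma_{r_1,s_1},\dots,\sigma_{r_m,s_m}$. In the first case, any graph satisfying all $m$ of these axioms satisfies $\varphi$, so $\mu_n(\varphi) \geq \mu_n(\sigma_{r_1,s_1} \wedge \cdots \wedge \sigma_{r_m,s_m}) \geq 1 - \sum_i (1-\mu_n(\sigma_{r_i,s_i})) \to 1$, giving $\mu(\varphi)=1$; symmetrically in the second case $\mu(\varphi)=0$.

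The main obstacle is Step 2, the completeness of $T_{\mathrm{ext}}$: Steps 1 and 3 are elementary counting and soft logic, but establishing completeness requires the back-and-forth construction and an appeal to the \L o\'s--Vaught test (or a direct Ehrenfeucht--Fra\"iss\'e argument showing $T_{\mathrm{ext}}$ decides each sentence of quantifier rank $n$ once enough extension axioms are assumed). Since the excerpt only promises a ``proof idea,'' I would state the extension axioms explicitly, prove Step 1 in full (it is short and illuminating), and then sketch Steps 2 and 3, citing \cite{Rfagin1976} and \cite{fmt} for the categoricity/completeness details rather than reproducing the full back-and-forth.
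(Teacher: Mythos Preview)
Your proposal is correct and follows essentially the same route as the paper: define the extension axioms, show each is almost surely true by a counting/union-bound argument, establish completeness of the resulting theory via $\aleph_0$-categoricity of the countable random graph, and then conclude by compactness. The only cosmetic difference is that in Step~3 the paper observes that a single sufficiently large axiom $EA_{k,2k}$ implies all the finitely many axioms used in the derivation (so one bounds $\mu(\varphi)$ by $\mu(EA_{k,2k})$ directly), whereas you use a union bound over the finite conjunction; both work.
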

\begin{proof}
The main steps of the proof involve the following: (i) defining a set of axioms termed as \emph{extension axioms} regarding how graphs with $n$ vertices can be extended to graphs with $n + 1$ vertices, and showing them to be surely true; (ii) constructing a countable model, $\mathsf{RG}$, say, of these axioms, unique upto isomorphism, and showing that for any sentence $\varphi$ in the given language, either $\mathsf{RG} \vDash \varphi$  or, $\mathsf{RG} \vDash \neg \varphi$; and finally, (iii) showing that for any sentence $\varphi$ with $\mathsf{RG} \vDash \varphi$, $\varphi$ is almost surely true. Thus we have that any sentence $\varphi$ is either almost surely true or almost surely false. 

    \textbf{Extension axioms:}
    The extension axioms simply state the fact that for any subgraph of size $n$, one can get a subgraph of size $n + 1$ and one can make the new vertex adjacent or non-adjacent to the pre-existing vertices in whatever way one wants. In case the new vertex is adjacent to $m$ pre-existing vertices, it can be expressed as follows:
    
   $$EA_{m,n} : \forall x_1 ,x_2 ...x_n (\bigwedge \limits_{1 \leq i < j \leq n} \neg (x_i= x_j) \rightarrow\exists z (\bigwedge \limits_{1 \leq i \leq n} \neg (z= x_i) \wedge \bigwedge \limits_{1\leq i \leq m} E(z,x_i) \wedge \bigwedge \limits_{m < i \leq n} \neg E(z,x_i)) )$$

   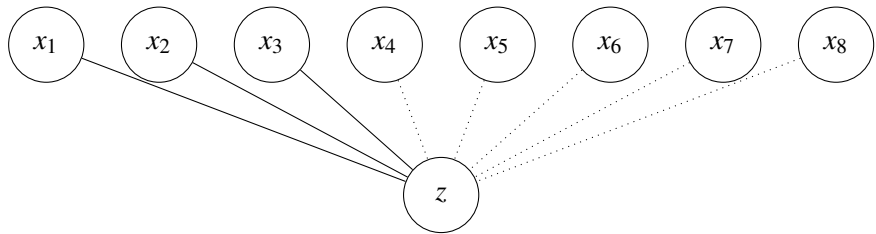
\begin{figure}[ht]
       \centering
       \begin{tikzpicture}[every node/.style={circle, draw, minimum size=1cm}, node distance=1.5cm]
\foreach \i in {1,...,8} {
    \node (x\i) at (\i*1.5, 0) {$x_{\i}$};
}

\node (z) at ({(8+1)*1.5/2}, -2) {$z$};

\foreach \i in {1,...,3} {
 
    \draw[-] (x\i) -- (z);
}
 \foreach \i in {4,...,8} {
   \draw[dotted] (x\i) -- (z);
   }
\end{tikzpicture}
\caption{$EA_{3,8}$ : for all possible choice of vertices $x_1, x_2, \ldots x_8$ we can always find a vertex $z$ that is adjacent to only 3 of those 8 vertices.}

       \label{fig:enter-label}
   \end{figure}
  
   \textbf{Extension axioms are almost surely true:} We consider $\mu_k (EA_{m,n})$, for $k > n$, and concentrate on $\mu_k (\neg EA_{m,n}).$ Our goal is to show that $\lim \limits_{k \to \infty}{\mu_k (\neg EA_{m,n})} = 0$, which would then imply $\lim \limits_{k \to \infty}{\mu_k (EA_{m,n})} = 1$. For $EA_{m,n}$ to be false in a $k$-size graph structure, there must be an $m$-size set of vertices, say $A$ and an $n - m$-size set of vertices, say $B$, such that no vertex in the complement of $A \cup B$ is adjacent to the vertices in $A$ and not adjacent to the vertices in $B$. For vertices in $A \cup B$, the possible  choices together with the choices for their adjacency are given by ${k\choose m} \times {{k-m} \choose {n-m}} \times 2^{n \choose 2}$. For the vertices outside $A\cup B $, the choice of adjacency is given by $2^{{k-n}\choose 2}$. Finally, for any vertex outside A$\cup$B, it can be adjacent to one in A$\cup$B in any way except for the desired one, and hence adjacency can be chosen in $2^n -1$ ways. Thus we have:

   $$\mu_k (\neg EA_{m,n}) \leq \frac{{k \choose m} \times {{k-m} \choose {n-m}} \times 2^{n \choose 2} \times 2^{{k-n}\choose 2} \times (2^n -1)^{k-n} }{2^{k \choose 2}} \leq \frac{k^n \times (2^n -1)^{k-n} }{2^{n \times(k-n)}} = k^n \times (1- \frac{1}{2^n})^{k-n}$$

   \noindent Thus, $\lim \limits_{k \to \infty}{\mu_k (\neg EA_{m,n})}=0 $, and hence $\lim \limits_{k \to \infty}{\mu_k (EA_{m,n})}=1 $, that is, 
  $\mu(EA_{m,n})= 1$ that is, $EA_{m,n}$ is almost surely true for all m,n $\in \mathbb{N}.$ 

  \textbf{Extension axioms have a unique countable model, say $\mathsf{RG}$:}
Let us consider $\mathsf{RG}$ to be a countable random graph, which has the set of natural numbers as its vertices and for any pair $(i,j)$ of vertices with $i < j$, we assume them to be adjacent with probability 1/2. We claim that $EA_{m,n}$ holds in $\mathsf{RG}$. If $EA_{m,n}$ does not hold at $\mathsf{RG}$, then there are $n$ vertices such that any other vertex can be related to them in all but one way (given by $EA_{m,n}$). The probability that one vertex cannot be related in the desired way is $(1-\frac{1}{2^n}).$ Thus, the probability that no vertex is related in the desired way is given by $\lim \limits_{k \to \infty}{(1-\frac{1}{2^n})^k} = 0$. Hence, $EA_{m,n}$ holds in $\mathsf{RG}$, and this completes the proof of our claim. That such a countable random graph is unique up to isomorphism can be shown by an inductive argument.

 \textbf{For any sentence $\varphi$ either $\mathsf{RG} \vDash \varphi$  or, $\mathsf{RG} \vDash \neg \varphi$ :}
Let us suppose that there is a sentence $\varphi$ such that $\mathsf{RG} \nvDash \varphi$  and $\mathsf{RG} \nvDash \neg \varphi$.
Then we can consider a model for all the extension axioms with $\varphi$, say $\mathsf{G^\prime}$, and another one
for all the extension axioms with $\neg\varphi$, say $\mathsf{G^{\prime\prime}}$. But then, 
$\mathsf{G^\prime}\cong \mathsf{G^{\prime\prime}} \cong \mathsf{RG}$, as both $\mathsf{G^\prime}$ and $\mathsf{G^{\prime\prime}}$ satisfy all the extension axioms.
This implies that $\mathsf{RG} \vDash \varphi$  and $\mathsf{RG} \vDash \neg \varphi$, a contradiction. 

\textbf{For any sentence $\varphi$ with $\mathsf{RG} \vDash \varphi$, $\varphi$ is almost surely true :}
As $\mathsf{RG} \vDash \varphi$, by compactness, there is a finite subset of extension axioms, $\mathit{Fin}$, say,
such that $\mathit{Fin} \vDash \varphi$.
Now, observe that $EA_{m',n'} \vDash EA_{m,n}$ if min$\{m',n'-m'\} \geq$ max$\{m,n-m\}$. Thus, if we take $k$ $=$ $\max_{EA_{m,n} \in \mathit{Fin}}{\{\max\{{m,n-m}\}\}}$, for any $EA_{m,n}\in \mathit{Fin}$, $EA_{k,2k} \vDash EA_{m,n}$ as $k\geq \max\{m,n-m\}$. Choosing $k$
 accordingly, and using the fact that $\mathit{Fin} \vDash \varphi$, we have that 
$EA_{k,2k} \vDash \varphi.$ Then, $\mu(\varphi) \geq \mu(EA_{m,n}) = 1$, that is, $\mu(\varphi) = 1.$
 Thus, for any sentence $\varphi$ such that $\mathsf{RG} \vDash \varphi$, $\varphi$ is almost surely true. 

\end{proof}


\section{Hiding on random graphs with constant edge probability}

Let us now use the proof discussed above to shed light on winning or losing in cops and robber games played on random graphs, when the number of vertices of the graph tends to infinity. As mentioned in the introduction, it would be natural to assume that \emph{the bigger the graph is, easier it is for the robber to hide in there}. In what follows, we provide a formal seal to this intuition that we have and also explore cases where the results may not agree with such instinctive way of thinking. 

We assume here that for any edge of the graph, the probability that the edge appears in the graph is $p$, where $0 < p < 1.$ We analyze what happens for the graph $G(N,p)$, where $N$ is the number of vertices and $p$ is the constant probability of an edge appearing that does not depend on $N$. We showed above that any FOL definable graph property is true in almost all graphs if it is true in the random graph $G(N,p)$ with $N \to \infty$ and $p$ constant. We considered $p = 1/2$ in the result proved in the previous section, but the same result is applied for any $p$ with $0 < p < 1.$ Let us first look at the game between a single cop and a robber played on random graphs $G(N,p)$.  

\begin{theorem}\label{one-cop}
 In a random graph $G(N,p)$ where p is the constant probability of an edge appearing in the graph and $N\rightarrow \infty$, the robber almost always has a winning strategy. 
\end{theorem}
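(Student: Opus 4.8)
The plan is to reduce the statement to the zero-one law of Theorem~\ref{fol}: I would isolate a single sentence $\psi$ of $\mathcal{L}_G$ with the property that \emph{every} graph satisfying $\psi$ is a robber-win graph against one cop, check that $\psi$ holds in the countable random graph $\mathsf{RG}$, and then conclude from Theorem~\ref{fol} that $\mu(\psi)=1$, so the robber almost always wins $G(N,p)$ as $N\to\infty$.

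For the sentence I would take
$$\psi \;:=\; \forall x\,\forall y\,\Big( x\neq y \;\rightarrow\; \exists z\,\big( z\neq x \wedge z\neq y \wedge E(y,z)\wedge\neg E(x,z)\big)\Big),$$
read informally as: whenever the cop is at $x$ and the robber is at a different vertex $y$, the robber has a neighbour $z$ that is neither the cop's vertex nor adjacent to it. First I would show that any graph with at least two vertices satisfying $\psi$ is robber-win: the robber keeps the invariant that, immediately after each of his moves, his current vertex $r$ is different from the cop's current vertex $c$ and non-adjacent to it. This is set up at the initial placement by applying $\psi$ with $x$ the cop's starting vertex (and any other vertex as $y$). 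It is maintained because any cop move from $c$ stays within $c$ and its neighbours, hence cannot land on $r$; and after the cop has moved to $c'$, applying $\psi$ with $x:=c'$ and $y:=r$ supplies the robber's next vertex $z$, which is a legal move ($E(r,z)$), avoids capture ($z\neq c'$), and re-establishes the invariant ($\neg E(c',z)$). So the robber is never caught.

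Next I would observe that $\mathsf{RG}\models\psi$: this is exactly the extension axiom $EA_{1,2}$ from the proof of Theorem~\ref{fol}, instantiated with the pair of vertices $\{y,x\}$ and the requirement that the new vertex be adjacent to $y$ and non-adjacent to $x$. Since $\mathsf{RG}$ satisfies every $EA_{m,n}$, it satisfies $\psi$; and by the last step of the proof of Theorem~\ref{fol} (every sentence true in $\mathsf{RG}$ is almost surely true), $\mu(\psi)=1$. Combining this with the fact that ``$G(N,p)$ has at least two vertices'' is eventually true and that $\psi$ forces a robber win, the robber almost always has a winning strategy on $G(N,p)$.

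Since the zero-one law does the real work, the only genuine obstacle is (a) choosing the right first-order sentence and (b) the bookkeeping that turns ``$G\models\psi$'' into an honest winning strategy --- in particular being careful about the turn order (cop moves, then robber) and about closed versus open neighbourhoods, so that a cop step from $c$ can never reach a vertex that is non-adjacent to $c$. I would also remark that $\psi$ is essentially the first-order statement that the graph has no dominated vertex (no ``corner''), which ties the argument to the classical characterisation of cop-win graphs as the dismantlable ones, and note that the single-cop hypothesis is essential: with several cops a strictly more expressive escape condition is needed.
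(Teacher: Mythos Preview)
Your proposal is correct and follows essentially the same route as the paper: both arguments isolate the extension axiom $EA_{1,2}$ (your sentence $\psi$), observe that it is almost surely true in $G(N,p)$ via the zero-one machinery, and read off a robber escape move from it at every round. Your invariant-based verification that $\psi$ actually yields a winning strategy (maintaining ``robber's vertex is distinct from and non-adjacent to the cop's vertex'') is more explicit than the paper's treatment, which simply asserts that the existence of $z$ lets the robber avoid the cop at each step; but the underlying idea is identical.
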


\begin{proof}
    Let us first define a neighborhood $\mathsf{N}(u)$ of a vertex $u$ in a graph by: $\mathsf{N}(u):= \{v : E(u,v)\}$, with $E$ denoting the binary edge relation on a graph.  For a single cop and a robber, if x is the position of the cop and y is the position of the robber, then the cop can win in the next move iff $\mathsf{N}(y) \subseteq \mathsf{N}(x)$. 
Similarly, the robber can prevent a win for the cop iff $\exists z$ such that $z \in \mathsf{N}(y)$ and $z \notin \mathsf{N}(x)$. Let us consider the extension axioms now.

     $$EA_{m,n} : \forall x_1 ,x_2 ...x_n (\bigwedge \limits_{1 \leq i < j \leq n} \neg (x_i= x_j) \rightarrow
      \exists z (\bigwedge \limits_{1 \leq i \leq n} \neg (z= x_i) \wedge \bigwedge \limits_{1\leq i \leq m} E(z,x_i) \wedge \bigwedge \limits_{m < i \leq n} \neg E(z,x_i)) )$$

It can be shown as earlier that the extension axioms hold in the random graph $G(N,p)$, as $N\rightarrow \infty$. If we take any n vertices, the probability that a vertex is adjacent or not adjacent to these vertices in the desired way is given by $p^m \times (1-p)^{n-m}$. Thus, the probability of the corresponding extension axiom not being true is given by
$\lim \limits_{N \to \infty}(1- (p^m \times (1-p)^{n-m}))^N = 0.$
Now, we claim that for any positions of the cop and the robber, the robber can avoid the cop in the next move, that is, for any two vertices $x$ and $y$, there is a vertex $z$, adjacent to $y$ but not adjacent to $x.$ This statement is equivalent to the following sentence in first-order language:
\begin{equation*}
    \forall x \forall y \exists z (\neg(x= y) \rightarrow E(y,z) \wedge \neg E(x,z))
\end{equation*}
 \begin{figure}[ht]
       \centering
       \begin{tikzpicture}[every node/.style={circle, draw, minimum size=1cm}]
  
  \node[fill=DarkOrchid!50, label=above:Cop] (x) at (0,2) {x};
  \node[fill=ForestGreen!50, label=above:Robber](y) at (4,2) {y};
  \node (z) at (2,0) {z};

  \draw (y) -- (z);
  \draw[dotted] (x) -- (z);
  
\end{tikzpicture}
\caption{The \textcolor{ForestGreen}{robber} at $y$ has a move to a vertex $z$ avoiding the \textcolor{DarkOrchid}{cop} at $x.$}
\end{figure}
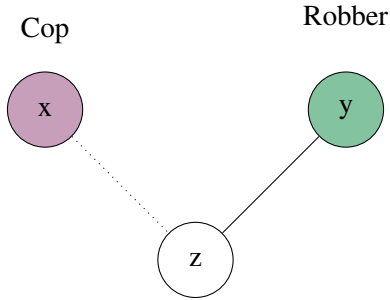

\noindent We have that 
\begin{equation*}
    \forall x \forall y \exists z (\neg(x= y) \rightarrow \neg(z=x) \wedge \neg (z=y) \wedge E(y,z) \wedge \neg E(x,z))
\end{equation*}
is an extension axiom, and it holds in the random graph.
This implies that at any point, there is always a vertex $z$, such that the robber has a move from its current position $y$ to the position $z$, in which case he can avoid the cop whose current position is $x$, as $x$ is not adjacent to $z$. Moreover, as extension axioms are almost surely true, we can say that robber has a winning strategy in almost all possible graphs.
\end{proof}

We now show that it does not help if you have a million cops on the lookout for a single robber when the game arena gets large enough, that is, even in the case for $k > 1$ cops and a single robber, the robber would still be able to avoid the cops.

\begin{theorem}\label{k-cop}
 In a random graph $G(N,p)$ where p is the constant probability of an edge appearing in the graph and $N\rightarrow \infty$, the robber almost always has a winning strategy, playing against $k > 1$ cops.
\end{theorem}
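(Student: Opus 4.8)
The plan is to lift the proof of Theorem~\ref{one-cop} from one cop to $k$ cops; the point is that the relevant ``the robber can escape right now'' condition stays a bounded first-order sentence and is therefore still governed by the zero-one law, so more cops buy nothing. First I would record the one-step escape condition: if the cops occupy vertices $x_1,\dots,x_k$ (not necessarily distinct) and the robber, not yet caught, sits at $y$, then it suffices for the robber to have a vertex $z$ with $z\in\mathsf{N}(y)$ and $z\notin\mathsf{N}(x_i)$ for every $i\le k$. Indeed $z\in\mathsf{N}(y)$ makes $z$ a legal robber move (the graph is reflexive), and since the cops also move only along edges, $z\notin\mathsf{N}(x_i)$ for all $i$ means no cop can reach $z$ on the ensuing cop turn; hence after that turn the robber at $z$ is again uncaught, and we are back in a configuration of the same type.

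Next I would express the existence of such a $z$ in $\mathcal{L}_G$. With $S=\{y,x_1,\dots,x_k\}$ and $n=|S|$, we have $n\ge 2$ since the robber is uncaught, and the vertex needed is one that is distinct from all of $S$, adjacent to $y$, and non-adjacent to every other element of $S$ --- that is, adjacent to exactly one prescribed vertex among $n$ distinct vertices. This is exactly what $EA_{1,n}$ asserts (order the $n$ vertices so that $y$ plays the role of $x_1$). Since the number of distinct vertices among $\{y,x_1,\dots,x_k\}$ can be any value in $\{2,\dots,k+1\}$, the statement ``from every uncaught configuration the robber has a safe successor'' is implied by the finite conjunction $\bigwedge_{n=2}^{k+1}EA_{1,n}$. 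Each $EA_{1,n}$ is almost surely true --- by the same estimate as in Theorems~\ref{fol} and~\ref{one-cop}, the probability that a fixed vertex fails to witness it is $1-p(1-p)^{n-1}$ and $\bigl(1-p(1-p)^{n-1}\bigr)^{N}\to 0$ --- and a finite conjunction of almost surely true sentences is almost surely true, so the conjunction holds in almost all $G(N,p)$.

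Finally I would assemble a winning strategy by induction on the round number: in any graph satisfying $\bigwedge_{n=2}^{k+1}EA_{1,n}$, start from the initial configuration (robber uncaught) and at every round send the robber to the safe successor $z$ guaranteed above; the invariant ``robber uncaught'' survives the following cop turn, so the robber is never caught and wins the infinite run. Hence the robber wins in almost all $G(N,p)$.

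I do not expect a genuine obstacle here; the work is bookkeeping. Two small points need care: (i) the degenerate configurations where several cops coincide or a cop sits on the robber, which is why I use the finite family $EA_{1,n}$, $2\le n\le k+1$, rather than a single extension axiom; and (ii) checking that one-step safety really iterates into safety for the whole infinite run, which holds because the escape property is stated uniformly over all configurations. The conceptual content is simply that increasing the number of cops does not change the shape of the quantifier prefix of the escape condition, so Theorem~\ref{fol} still applies.
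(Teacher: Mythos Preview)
Your proposal is correct and follows essentially the same approach as the paper: both reduce the robber's survival to an extension axiom of the form $EA_{1,\cdot}$ and invoke its almost-sure truth in $G(N,p)$. You are somewhat more careful than the paper in explicitly covering configurations with coinciding cop positions via the finite family $\bigwedge_{n=2}^{k+1}EA_{1,n}$ and in spelling out the induction on rounds, whereas the paper writes down only the single axiom $EA_{1,k+1}$ (with distinctness in the antecedent) and defers to the one-cop argument.
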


\begin{proof}
We can show that the robber has a winning strategy by considering the following extension axiom:

    $$\forall x_1 \forall x_2 ... \forall x_k \forall y \exists z ((\bigwedge \limits_{1\leq i < j \leq k} \neg(x_i=x_j)\wedge \bigwedge\limits_{i=1}^{k} \neg (x_i=y)) \rightarrow \bigwedge \limits_{i=1}^{k} \neg (x_i=z) \wedge \neg (y=z) \wedge \bigwedge \limits_{i=1}^{k} \neg E(x_i,z)\wedge E(y,z))$$

Arguing in a similar manner, we can say that the robber has a winning strategy on a random graph, as at any point there is a vertex $z$, such that the robber with its current position $y$, can move to $z$, and the cops whose current positions are $x_1, \ldots, x_k$ cannot reach $z$, as z is not adjacent to any one of them.
\end{proof}

Combining the results above, we can make the following statement regarding cops and robber game played on random graphs.

\begin{Corollary}\label{cor-1}
    The robber almost always has a winning strategy in cops and robber game played on random graphs with $k \geq 1$ cops and a single robber.
 \end{Corollary}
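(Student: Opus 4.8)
The plan is to obtain the corollary by a trivial case split on the number of cops $k$, invoking the two theorems just established. For $k = 1$, Theorem~\ref{one-cop} already states that on $G(N,p)$ with constant $p$ and $N \to \infty$ the robber almost always has a winning strategy. For each fixed $k > 1$, Theorem~\ref{k-cop} yields the same conclusion. Since every $k \geq 1$ falls into exactly one of these two cases, the combined statement holds for all $k \geq 1$, and no further argument is required.

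The only point worth making explicit is that in both theorems the number of cops is a fixed constant that does not depend on $N$, so the extension axiom exhibited in each proof is a single first-order sentence in $\mathcal{L}_G$; consequently ``almost always'' is to be read for each fixed $k$ separately, and the corollary is precisely the union of these per-$k$ statements. No new probabilistic estimate and no new logical construction is needed: the substantive work — showing that the relevant extension axioms are almost surely true in $G(N,p)$ — was carried out in the proofs of Theorems~\ref{one-cop} and~\ref{k-cop}.

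Accordingly, there is essentially no obstacle here; the corollary is a bookkeeping statement packaging the two preceding results. The only place where genuine new work would be required is if one wanted $k$ to grow with $N$: then one would need a quantitative refinement controlling how fast the failure probability of the corresponding extension axiom decays as a function of both $N$ and $k$. That strengthening is outside the scope of what is claimed, so the proof I would write simply records the case analysis above.
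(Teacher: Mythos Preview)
Your proposal is correct and matches the paper's approach exactly: the corollary is presented in the paper without a separate proof, simply as the combination of Theorems~\ref{one-cop} and~\ref{k-cop}, which is precisely the case split on $k=1$ versus $k>1$ you describe. Your remark that $k$ must be fixed (so that the relevant extension axiom is a single first-order sentence) is a helpful clarification that the paper leaves implicit.
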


We have dealt with the general cops and robber games played on random graphs where edges occur with a constant probability $p.$ Now we move on to certain variants of the game played on large random game graphs to investigate the possibility of cops catching the robber in these variants. 
 
 \subsection{Cops and robber games with traps and roadblocks}

 Among the variants, we start with a lesser known one. The main idea is that of \emph{traps} \cite{traps,cop-robber-book} that can be placed on any vertex by a cop. If the robber ever occupies the same vertex, then he is caught. 
 The cops can set those traps in some vertices when they arrive and can also remove when one of them arrives again at that vertex and reuse that trap. The robber gets caught if he meets either any of the cops or reaches any of the traps. In some sense, cops may have more power when we consider $m$ cops with $n$ traps and a single robber. However, that is not always the case: if we consider the complete bipartite graph $K_{3,3}$, 2 cops can win the corresponding game, whereas 1 cop having the ability to put 1 trap cannot win the game. Evidently, even with all these traps, the cops cannot win as we see below.
 
 \begin{theorem}\label{trap}
 In a random graph $G(N,p)$ where p is the constant probability of an edge appearing in the graph and $N\rightarrow \infty$, the robber almost always has a winning strategy, playing against $m$ cops equipped to put $n$ traps.
     
 \end{theorem}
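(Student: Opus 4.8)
The plan is to follow the same route as Theorems~\ref{one-cop} and~\ref{k-cop}: write down a single first-order sentence saying that in \emph{every} reachable configuration the robber has a safe one-step move, recognise that sentence as (essentially) an extension axiom, and then appeal to the fact — established in the proof of Theorem~\ref{fol} and reused for arbitrary constant $p$ in the proof of Theorem~\ref{one-cop} — that every extension axiom is almost surely true in $G(N,p)$ as $N\to\infty$.

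First I would pin down the trap model: a cop can drop a trap only on the vertex it currently occupies, a trap is lifted when a cop revisits that vertex, the $n$ traps are reusable, and the robber is caught precisely when he occupies a vertex carrying a cop or a trap; hence at most $n$ traps are active at any moment. Write $x_1,\dots,x_m$ for the cops' vertices, $w_1,\dots,w_\ell$ ($\ell\le n$) for the trapped vertices, and $y$ for the robber's vertex in some reachable not-yet-won configuration, so $y\notin\{x_i\}\cup\{w_j\}$. The invariant the robber will keep is: \emph{at the end of each of his turns he sits on a vertex $z$ with $E(y,z)$, with $z$ distinct from every current $x_i$ and $w_j$, and with $\neg E(x_i,z)$ for all $i$}; one checks exactly as for Theorem~\ref{k-cop} that this invariant already guarantees the robber is never caught, and the initial placement is handled by the same kind of axiom used below (with just the cop vertices).

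The crucial step is to show the invariant can always be re-established. Consider the sentence
\[
\forall x_1\cdots\forall x_m\,\forall w_1\cdots\forall w_n\,\forall y\,\exists z\,\Big(\Phi\;\rightarrow\;\neg(y=z)\wedge\bigwedge_{i=1}^{m}\neg(x_i=z)\wedge\bigwedge_{j=1}^{n}\neg(w_j=z)\wedge E(y,z)\wedge\bigwedge_{i=1}^{m}\neg E(x_i,z)\Big),
\]
where $\Phi$ asserts that $y,x_1,\dots,x_m,w_1,\dots,w_n$ are pairwise distinct; designating $y$ as the one vertex $z$ must be adjacent to, this is the extension axiom $EA_{1,m+n+1}$, which is almost surely true by the computation in the proof of Theorem~\ref{one-cop} (a fixed further vertex realises the required pattern with probability $p(1-p)^{m+n}$, and $\bigl(1-p(1-p)^{m+n}\bigr)^{N}\to 0$). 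To handle configurations in which some cops share a vertex, or a trap sits on a cop's vertex, I would either take the finite conjunction $\bigwedge_{r=1}^{m+n+1}EA_{1,r}$ — a finite conjunction of almost surely true sentences, hence almost surely true — or simply pad the list of ``dangerous'' vertices with fresh distinct vertices, which is harmless since $N$ is large. In any graph satisfying these axioms the robber moves to the vertex $z$ they supply: the move $y\to z$ is legal, and $z$ carries no cop and no trap, so he is not caught on arrival; moreover $\neg E(x_i,z)$ and $z\neq x_i$ force that no cop can step onto $z$ on the cops' ensuing turn, so no new trap is dropped on $z$ either, and no old trap sat on $z$ — so after the cops reply the robber again occupies a cop-free, trap-free vertex, restoring the invariant.

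Iterating, the robber survives every round and therefore wins; since all the axioms invoked hold in almost every $G(N,p)$ as $N\to\infty$, the robber almost always has a winning strategy. Everything here but the trap bookkeeping is a rerun of Theorem~\ref{k-cop} with $n$ extra quantified vertices, so the one point that needs genuine care — the main potential obstacle — is the argument that an escape vertex which is merely non-adjacent to all the cops cannot acquire a trap on the cops' next move; this is exactly where the ``traps live on cop vertices and nowhere else'' feature of the model is used, and without it the conclusion could fail.
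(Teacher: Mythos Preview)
Your proposal is correct and follows essentially the same route as the paper: write down the extension axiom asserting that, given cop positions $x_1,\dots,x_m$, trap positions, and robber position $y$, there is a $z$ adjacent to $y$, non-adjacent to every $x_i$, and distinct from all traps, then invoke the almost-sure truth of extension axioms in $G(N,p)$. You are more careful than the paper about a couple of points it glosses over (handling coinciding cop/trap vertices via the conjunction $\bigwedge_r EA_{1,r}$, and spelling out why no trap can land on $z$ during the cops' reply), but the core argument is the same.
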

 \begin{proof}
     Let $x_1,x_2,...,x_m$ be the positions of the cops and $t_1,t_2,...,t_k$ be the positions of the traps at some points where $k \leq n$, and $y$ be the position of the robber. Consider the following extension axiom:
     
         $$\forall x_1... \forall x_m \forall t_1...\forall t_k \forall y \exists z \Bigl( \bigl( \bigwedge \limits_{\substack{1 \leq i \leq m\\ 1\leq j \leq k}} (\neg (x_i = t_j) \wedge \neg (y = x_i) \wedge \neg (y = t_j)) \wedge \bigwedge \limits_{\substack{i \neq j\\ i,j= 1}}^{m} \neg (x_i=x_j) \wedge \bigwedge \limits_{\substack{i \neq j\\ i,j= 1}}^{k} \neg (t_i=t_j) \bigr)\rightarrow $$

\begin{flushright}
    $\bigl(\bigwedge \limits_{i=1}^m \neg (z=x_i)\wedge \bigwedge \limits_{i=1}^k \neg (z=t_i) \wedge \neg(y=z) \wedge \bigwedge \limits_{i=1}^m \neg E(z,x_i) \wedge E(y,z) \bigr) \Bigr)$
\end{flushright}

If there is any such $z$, then robber can move to $z$, as they are adjacent. And he can avoid all cops as no cop position is adjacent to $z$, and he can also avoid traps as there is no trap position is equal to $z$. Since extension axioms hold in random graph, we have that the robber can avoid crops and traps at any point in that graph, and thus the robber has a winning strategy.
  \end{proof}

 We can also consider a similar variant where a finite number of \emph{road-blocks} can be placed on the edges by cops \cite{cop-robber-book}. The cops can move over those blocked edges but the robber cannot. These road-blocks can only be put on edges when a cop is at one of the endpoints of the edge. As in the case of traps, the cops get empowered once again, but still cannot win. The results follow in the same way as in the case of the games with traps, as placing a road-block has the same effect as placing a trap on one of the end points between which the road-block is put.

 \subsection{Cops and Robbers with different edge set}

In the next variant of the game that we consider, the cops and the robber get different edge sets to move -- the cops get the complement of the robber's edge set. Without loss of generality we will consider a single cop here. We consider the given graph to be the edge set of the robber, and its complementary set provides the cop's edges to move.

\begin{theorem}\label{different}
 In a random graph $G(N,p)$ where p is the constant probability of an edge appearing in the graph and $N\rightarrow \infty$, where cops and robber move along complementary edge sets, the robber almost always has a winning strategy.
 \end{theorem}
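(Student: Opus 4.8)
## Proof Proposal

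The plan is to follow the same template established in Theorems~\ref{one-cop}--\ref{trap}: identify a first-order property over the edge relation $E$ that, if true, guarantees the robber can always escape, show that this property follows from (a finite subset of) the extension axioms, and then invoke Theorem~\ref{fol} together with the fact that the extension axioms are almost surely true. The only twist here is that the cop now moves along the \emph{complementary} edge relation. Writing $E$ for the robber's edge set (the given random graph), the cop at vertex $x$ can catch the robber at $z$ in the next move precisely when $\neg E(x,z)$ (and $x \neq z$). So the robber, currently at $y$, wants to move to some $z$ with $E(y,z)$ (a legal robber move) such that the cop cannot then jump to $z$, i.e. such that $E(x,z)$ holds, while also $z \neq x$ and $z \neq y$.

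Concretely, I would first state the escape condition as the sentence
\begin{equation*}
 \forall x \forall y \exists z \bigl( \neg(x=y) \rightarrow \neg(z=x) \wedge \neg(z=y) \wedge E(y,z) \wedge E(x,z) \bigr),
\end{equation*}
and observe that this is exactly an extension axiom of the form $EA_{m,n}$ with $n=2$ and $m=2$ (the new vertex $z$ is required to be adjacent to both of the two given vertices $x,y$). By the argument already given in the proof of Theorem~\ref{fol} and reused in Theorem~\ref{one-cop}, $EA_{2,2}$ holds in the countable random graph $\mathsf{RG}$ and is almost surely true in $G(N,p)$ as $N\to\infty$; indeed the probability that a fixed further vertex fails to witness it is $1 - p^2$, so the probability the axiom fails is at most $\lim_{N\to\infty}(1-p^2)^N = 0$. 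Then I would argue that this single sentence suffices for a winning strategy: at every round, whatever the positions $x$ of the cop and $y$ of the robber, there is a vertex $z$ adjacent to $y$ in the robber's graph, so the robber may move there, and since $E(x,z)$ holds, the edge $xz$ is \emph{absent} from the cop's complementary edge set, so the cop cannot reach $z$ on the following turn. Hence the robber survives forever, and since $EA_{2,2}$ is almost surely true, the robber almost always has a winning strategy.

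One subtlety worth flagging, and the place where this variant is genuinely different from the earlier ones, is the issue of initial placement and the very first move: the robber's escape condition must hold at the \emph{robber's} turn, so one should check that the robber can be positioned (or can reach, on his first move) a configuration from which the invariant applies — but since the escape sentence is universally quantified over all pairs $x,y$ with $x \neq y$, it already covers every configuration that can arise, and the only degenerate case, $x=y$, means the robber has already been caught before play, which we exclude by the game's setup. A second point to handle carefully is that we need $z \notin \{x,y\}$ so that the move is to a genuinely new vertex and the robber does not accidentally step onto the cop; this is built into the extension-axiom formulation, so no extra work is needed. I do not anticipate a serious obstacle here — the main (very mild) care is simply in getting the polarity of $E$ right in the escape formula, since the cop's reachability is the \emph{negation} of $E$, which is what makes the required witness need $E(x,z)$ rather than $\neg E(x,z)$; everything else is a direct reprise of the preceding proofs.
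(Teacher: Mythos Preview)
Your proposal is correct and matches the paper's approach essentially verbatim: the paper uses the same extension-axiom sentence $\forall x\,\forall y\,\exists z\,(\neg(x=y)\rightarrow E(y,z)\wedge E(x,z))$ and the same observation that, since the cop moves along the complementary edge set, the robber's safe move requires $E(x,z)$ rather than $\neg E(x,z)$. Your added remarks on the $z\neq x,y$ clauses and the initial configuration are sound refinements but not substantive departures.
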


\begin{proof}
    The proof is similar to the one cop and one robber game, the only subtle difference is in the extension axiom:
$$
   \forall x \forall y \exists z (\neg(x=y) \rightarrow E(y,z) \wedge E(x,z))
$$  
   The statement says that for any cop position $x$ and any robber position $y$, there exists a move from $y$ to $z$, where x cannot reach, as the cop and the robber move in complementary edge sets, and $E$ corresponds to robber's moves in particular.
As earlier, we can say that robber has a winning strategy in almost all such games. Note the difference in the axiom from the one cop - one robber case.
\end{proof}

 \subsection{Cops and robber game with tandem-cops}
 
The final variant that we will consider is a game played by tandem-cops. In this case, two cops move together \emph{in tandem}, that is, they can either be at the same vertex or at vertices adjacent to each other, and this condition continues to hold as they move along the edges of the graph. Such a pair of cops constitute the notion of tandem-cops. Thus, when one of the tandem-cops moves to a vertex, the other one may to move to the same vertex or to one that is adjacent to the former. If we consider two cops, say $C_1$ and $C_2$, without loss of generality, we can have $C_1$ moving first to a vertex $x_1'$, say, which is adjacent to his previous position. Then the other cop $C_2$ can move to $x_1'$ or any other vertex $x'_2$, say, that is adjacent to $x_1'$.

A game is \emph{tandem-win} if one pair of tandem-cops suffices to capture the robber. We note that the class of games where the cops win is a proper subset of the class of games where the tandem-cops win. If we consider the cycle with 4 vertices, $C_4$, one can easily check that it is a tandem-win, whereas, in the case of one cop, the robber has a winning strategy by taking his position at the vertex diagonally opposite to that of the cop. Consider the  \textit{Petersen Graph} given below, where at least three cops are needed to beat the robber. Note that the graph has diameter two; that is, any two vertices are at a distance of at most two units. As tandem-cops can cover two units of distance in a single move, the game on this graph is tandem-win. In the following, we check whether tandem-cops provide any advantage over the robber in random graphs. Finally, they do!

\begin{figure}[ht]
    \centering
    \begin{tikzpicture}[scale=2]
  
  \foreach \i in {0,...,4} {
    \node[draw, circle] (A\i) at ({90 - \i*72}:1) {\i};
  }

  \foreach \i in {5,...,9} {
   
    \node[draw, circle, ] (A\i) at ({90 - \i*72}:0.5) {\i};
  }
 \draw[-] (A0) -- (A1);
  \draw[-] (A1) -- (A2);
  \draw[-] (A2) -- (A3);
  \draw[-] (A3) -- (A4);
  \draw[-] (A0) -- (A4);  

  \foreach \i in {0,...,4}
  {
  \pgfmathtruncatemacro{\j}{\i+5}
  \draw[-] (A\i) -- (A\j);
  }
  \draw[-] (A5) -- (A7);
  \draw[-] (A5) -- (A8);
  \draw[-] (A6) -- (A8);
  \draw[-] (A6) -- (A9);
  \draw[-] (A7) -- (A9);

\end{tikzpicture}
    \caption{Petersen Graph}
\end{figure}
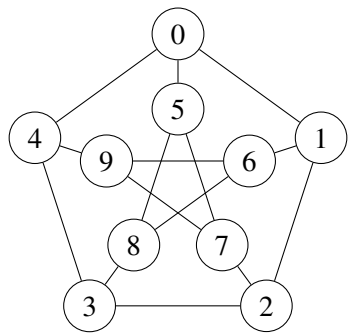
 
\begin{theorem}\label{tandem}
 In a random graph $G(N,p)$ where p is the constant probability of an edge appearing in the graph and $N\rightarrow \infty$, cops almost always have a winning strategy whenever there is a pair of tandem-cops playing.
 \end{theorem}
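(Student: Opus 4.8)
The plan is to reduce the claim to a single almost‑sure fact about $G(N,p)$ — that every two distinct vertices have a common neighbour, equivalently that $G(N,p)$ almost surely has diameter at most two — and then to exhibit a one‑round capture strategy for the tandem‑cops on any graph with that property. The point is that ``every two distinct vertices have a common neighbour'' is \emph{exactly} the extension axiom $EA_{2,2}$, so by the computation in the proof of Theorem~\ref{fol} it is almost surely true; the computation there was written for $p=1/2$, but as already remarked it goes through verbatim for any constant $p\in(0,1)$, with $p^m(1-p)^{n-m}$ in place of $1/2^n$. Thus $\mu(EA_{2,2})=1$, i.e. $G(N,p)$ almost surely has the common‑neighbour property.

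Next I would fix any (reflexive) graph $G$ in which every two distinct vertices have a common neighbour and describe the tandem‑cops' strategy. Both cops $C_1,C_2$ start on an arbitrary vertex $c$; the robber then picks a vertex $r$, necessarily with $r\neq c$. On the cops' turn, if $E(c,r)$ then $C_1$ moves $c\to r$ while $C_2$ stays at $c$ (legal, since $c$ is adjacent to $r$), capturing the robber. Otherwise choose a common neighbour $m\notin\{c,r\}$ of $c$ and $r$: then $C_1$ moves $c\to m$ and $C_2$ moves onto $r$, which is a legal tandem‑move because $r$ is adjacent to $C_1$'s new position $m$, and the robber is captured before he gets to move. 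This is precisely the ``tandem‑cops cover two units of distance in a single move'' phenomenon invoked for the Petersen graph above. Combining this with the previous paragraph and Theorem~\ref{fol}: $G(N,p)$ almost surely satisfies $EA_{2,2}$, hence almost surely the tandem‑cops have a winning strategy, which is the assertion.

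I expect the only delicate point to be the second paragraph: making the move semantics of the tandem‑cops fully precise — which cop moves first within a round, and the fact that the trailing cop $C_2$ is permitted to move to \emph{any} vertex adjacent to $C_1$'s new position — so that the one‑round capture is unambiguously a legal play. It is also worth recording explicitly that the robber's opening placement gives him nothing: the common‑neighbour property bounds his distance from $c$ wherever he sits (and whether or not he is granted a move before the cops act), so no choice on the robber's part interferes with the argument. The probabilistic and model‑theoretic content is otherwise entirely routine given Theorem~\ref{fol}.
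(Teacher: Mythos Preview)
Your proposal is correct and is essentially the paper's own argument: the paper too exhibits a one-round capture by having $C_1$ move to a common neighbour $z$ of the first cop's vertex and the robber's vertex and then sending $C_2$ to $y$, appealing to an extension axiom that (despite carrying the extra universal $x_2$) is just the common-neighbour statement you call $EA_{2,2}$. The only cosmetic difference is that you place both cops on the same start vertex $c$, which lets you quote $EA_{2,2}$ verbatim rather than the three-variable variant the paper writes down; and you are right that the only point needing care is the tandem move semantics allowing $C_2$ to land on any neighbour of $C_1$'s new position, which is exactly how the paper reads the rule.
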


\begin{proof}
    Suppose $x_1$ and $x_2$ are the positions of the tandem-cops. Then there is an edge between $x_1$ and $x_2$.\\
    Let $y$ be the position of the robber. We consider the following extension axiom:
   
        $$\forall x_1 \forall x_2 \forall y \exists z ((\neg (x_1 =x_2) \wedge \bigwedge \limits_{i=1,2}\neg (x_i=y)) \rightarrow (\bigwedge \limits_{i=1,2} \neg (x_i=z)\wedge \neg(y=z) \wedge E(x_1,z) \wedge E(y,z) ))$$

    \begin{figure}[ht]
       \centering
\begin{tikzpicture}[every node/.style={circle, draw, minimum size=1cm}]
  \node[fill=DarkOrchid!50, label={[yshift=-0.3cm] Cop 1}] (x) at (0,2) {$x_1$};
  \node[fill=DarkOrchid!50, label={[yshift=-0.3cm] Cop 2}] ($x_2$) at (2,2) {$x_2$};
  \node[fill=ForestGreen!50, label={[yshift=-0.3cm] Robber}] ($y$) at (4,2) {y};
  \node (z) at (2,0) {z};
  \draw[-] (x) -- (z);
  \draw[-] (y) -- (z);
  \node (x1) at (6,2) {$x_1$};
  \node ($x_2$) at (8,2) {$x_2$};
  \node[fill=teal!50, label={[yshift=-1cm] Cop 2 / Robber}] (y1) at (10,2) {y};
  \node [fill=DarkOrchid!50, label={[yshift=-2cm]Cop 1}](z1) at (8,0) {z};
  \draw[-] (x1) -- (z1);
  \draw[-] (y1) -- (z1);
\end{tikzpicture}
\caption{The \textcolor{DarkOrchid}{tandem-cops} can catch the \textcolor{ForestGreen}{robber} in a single move.}
\end{figure}
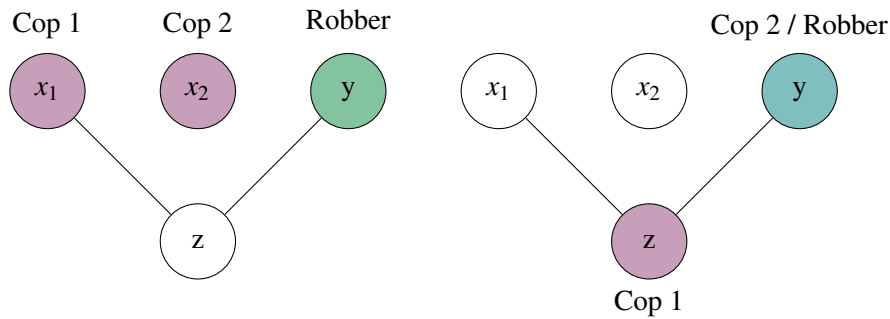 

If there is a vertex $z$  adjacent to both $x_1$ and $y$, we can move one of the tandem-cops, at $x_1$ say, to $z$ and as $y$ is adjacent to $z$, we can move the other tandem-cop to $y$ and catch the robber. And as we know, extension axioms are true in random graph $G(N,p)$ with $N \to \infty$ and $p$ constant, we can conclude that the cops have a winning strategy in the corresponding game.
    Moreover, as extension axioms are true in almost all possible graphs, cops have a winning strategy in almost all cops and robber games in presence of tandem-cops. 
\end{proof}

A natural question to ask: what is so special about tandem-cops? In whatever ways these cops move, they stay in two adjacent vertices. When one of the cops at $x_1$, say, move to a neighboring vertex, say $x'_1$, the other cop can move to any neighbor of $x'_1$, say $x'_2.$ This way it is guaranteed that the cops can cover twice the distance, as the distance between $x_1$ and $x'_2$ is 2 units. In a random graph $G(N,p)$ where $0<p<1$ is the constant edge probability, the distance between any two vertices is at most $2$ almost surely -- once again, an application of the extension axioms, which gives the cops an advantage. We should mention here that in a connected graph, a cop who can move two steps at a time always has a winning strategy against the robber, as he can continue reducing the distance between them. Using the fact that a large random graph is \emph{almost surely} connected, we can get the same result: a large random graph is \emph{almost surely} a tandem-win graph.

To finish this discussion, we note that we can similarly deal with other variants of the cops and robber game present in the literature, and show the ubiquitous applications of the extension axioms with respect to various studies on random graphs. Let us now move on to the case of random graphs with varying edge probability, which has its own nuances.


\section{Hiding on random graphs with varying edge probability}

In the previous section, we have considered random graphs $G(N, p)$, where $0<p<1$ is constant. Now, we will focus on the graphs where the edge-assigning probability depends on the graph size $N$, given by $p(N)$, say, a function of $N$. We assume here that for any edge of a graph of size $N$, the probability that the edge appears in the graph is $p(N)$, and we analyze what happens for the cops and robber game played over the random graph $G(N,p(N))$, as $N \to \infty.$ We first look at the game between a single cop and a robber played on random graphs $G(N,p(N))$, and explore the values for $p(N)$ for which the cop or the robber has a winning strategy as $N \to \infty.$ We have already seen that for the constant function $p(n)=p$ where $0<p<1$, the probability that the robber has a winning strategy tends to 1 as $N \to \infty$. 

In what follows, we consider bounds over the edge-probability function $p(N)$ explore their effects on the winning strategies of players in the game. To this end we define the threshold functions \cite{threshold} for properties in random graphs.

\begin{definition}[Threshold function]
    Let $G(N,p(N))$ be a random graph with varying edge probability and $\mathsf{P}$ be a graph property. We call a function $f(N)$ threshold function for $\mathsf{P}$ if:
    \begin{enumerate}
    \item whenever $p(N) \ll f(N)$, $\mu_G(\mathsf{P}) = 0$.
    \item whenever $f(N) \ll p(N)$, $\mu_G(\mathsf{P}) = 1$.
    \end{enumerate}
    where, $a(n)\ll b(n)$ means that $\lim\limits_{n \rightarrow \infty}\frac{a(n)}{b(n)} = 0$
\end{definition}

We give below a sufficient condition for the existence of threshold functions. Before doing that, let us first introduce the notion of edge-monotone property.

\begin{definition}[Edge-monotone property] A graph property $\mathsf{P}$ is said to be edge-monotone increasing (decreasing) if for any graph $G$ satisfying $P$, a graph $G'$ obtained from  $G$ by adding (deleting) an edge would also satisfy $\mathsf{P}.$ 
\end{definition}

\begin{theorem}[\cite{threshold}]\label{threshold}
    Any edge-monotone graph property always admits a threshold function.
\end{theorem}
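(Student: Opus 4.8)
The plan is to adapt the classical boosting argument that underlies \cite{threshold}. Assume throughout that $\mathsf{P}$ is edge-monotone increasing; the decreasing case is entirely analogous after passing to the complementary property $\neg\mathsf{P}$, which is increasing and satisfies $\mu_N(\neg\mathsf{P}) = 1 - \mu_N(\mathsf{P})$, with the roles of $0$ and $1$ interchanged. We may also assume $\mathsf{P}$ is nontrivial — false on the empty graph, true on the complete graph — since otherwise any function is trivially a threshold. For each $N$ write $g_N(p)$ for the probability that $G(N,p)$ satisfies $\mathsf{P}$. This is a polynomial in $p$, hence continuous, with $g_N(0) = 0$ and $g_N(1) = 1$, and it is non-decreasing in $p$: under the standard monotone coupling — label each of the $\binom{N}{2}$ potential edges with an independent uniform variable on $[0,1]$ and keep an edge in $G(N,p)$ exactly when its label is below $p$ — we get $G(N,p) \subseteq G(N,p')$ whenever $p \le p'$, and an increasing property survives the addition of edges. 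By the intermediate value theorem there is therefore a critical value $p_c = p_c(N) \in (0,1)$ with $g_N(p_c) = \tfrac12$. I claim that $f(N) := p_c(N)$ is a threshold function for $\mathsf{P}$.

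The engine of the proof is an amplification lemma. If $G_1,\dots,G_k$ are independent copies of $G(N,p)$ and $G := G_1 \cup \dots \cup G_k$ is their edge-union, then a given edge is absent from $G$ exactly when it is absent from every $G_i$, so $G$ is distributed as $G(N,q)$ with $q = 1-(1-p)^k$, and Bernoulli's inequality gives $q \le kp$. Since $\mathsf{P}$ is increasing, $G \models \mathsf{P}$ as soon as some $G_i \models \mathsf{P}$, so the event that $G$ fails $\mathsf{P}$ is contained in the event that all $G_i$ fail $\mathsf{P}$; by independence,
\[
1 - g_N(q) \;\le\; \bigl(1 - g_N(p)\bigr)^{k}.
\]

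Both clauses of the definition now follow by choosing $k = k(N) \to \infty$ appropriately. Suppose first $f(N) \ll p(N)$, so $p(N)/p_c(N) \to \infty$; set $k(N) := \lfloor p(N)/p_c(N) \rfloor$ and apply the lemma with $p = p_c(N)$, obtaining $q(N) \le k(N)\,p_c(N) \le p(N)$. Then monotonicity gives $g_N(p(N)) \ge g_N(q(N))$, so
\[
1 - g_N(p(N)) \;\le\; 1 - g_N(q(N)) \;\le\; \bigl(1 - g_N(p_c(N))\bigr)^{k(N)} = 2^{-k(N)} \;\longrightarrow\; 0,
\]
i.e.\ $\mu(\mathsf{P}) = 1$. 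Suppose instead $p(N) \ll f(N)$, so $p_c(N)/p(N) \to \infty$; set $k(N) := \lfloor p_c(N)/p(N) \rfloor$ and apply the lemma with $p = p(N)$, obtaining $q(N) \le k(N)\,p(N) \le p_c(N)$. Then $g_N(q(N)) \le g_N(p_c(N)) = \tfrac12$, and together with $1 - g_N(q(N)) \le \bigl(1 - g_N(p(N))\bigr)^{k(N)}$ this forces $\bigl(1 - g_N(p(N))\bigr)^{k(N)} \ge \tfrac12$, hence $g_N(p(N)) = \mu_N(\mathsf{P}) \to 0$, i.e.\ $\mu(\mathsf{P}) = 0$. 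Thus $p_c(N)$ is a threshold function for $\mathsf{P}$.

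I expect the main conceptual step to be the amplification lemma — the observation that a union of $k$ independent copies of $G(N,p)$ is again an Erd\H{o}s--R\'enyi graph with edge probability $1-(1-p)^k \le kp$, so that monotonicity converts a success probability merely bounded away from $0$ at scale $p$ into one exponentially close to $1$ at scale $\le kp$. Once this is in hand the two halves fall out, and the only remaining checks are bookkeeping: that the chosen $k(N)$ genuinely diverge (immediate, since $p(N)/p_c(N)$, resp.\ $p_c(N)/p(N)$, tends to infinity), that after rounding down one still has $k(N)\,p_c(N) \le p(N)$, resp.\ $k(N)\,p(N) \le p_c(N)$ (also immediate from $\lfloor x\rfloor \le x$), and that the nontriviality assumption legitimizes the intermediate-value step producing $p_c(N)\in(0,1)$.
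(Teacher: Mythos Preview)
The paper does not actually prove this theorem: it is quoted from \cite{threshold} (the Bollob\'as--Thomason result) and immediately used, with no argument supplied. So there is no ``paper's own proof'' to compare against.

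Your argument is correct and is precisely the classical Bollob\'as--Thomason boosting proof: define the critical probability $p_c(N)$ via the intermediate value theorem, then use the fact that an edge-union of $k$ independent copies of $G(N,p)$ is distributed as $G(N,1-(1-p)^k)$ together with monotonicity to amplify a success probability of $\tfrac12$ into one exponentially close to $1$ (and dually for the lower tail). The bookkeeping with $k(N) = \lfloor p(N)/p_c(N)\rfloor$ and $k(N) = \lfloor p_c(N)/p(N)\rfloor$ is handled cleanly, and the final step in the second case---from $(1-g_N(p(N)))^{k(N)} \ge \tfrac12$ with $k(N)\to\infty$ to $g_N(p(N))\to 0$---is valid since it forces $1 - g_N(p(N)) \ge 2^{-1/k(N)} \to 1$. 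One small caveat: your ``nontrivial'' assumption should be read as holding for all sufficiently large $N$ (so that $p_c(N)$ is well-defined eventually), but this is the standard convention and does not affect the asymptotic conclusion.
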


Thus, if we can show that the property of a player winning the one cop and one robber game is an edge-monotone property, that would ensure the existence of threshold functions. However, that is not the case. 

\begin{proposition}\label{monotone}
 The property of a player winning the game of one cop and one robber played on a graph is neither edge-monotone increasing nor edge-monotone decreasing.   
\end{proposition}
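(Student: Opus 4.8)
The plan is to exhibit explicit small graphs witnessing the failure of both monotonicity directions, since a single counterexample in each direction suffices to refute edge-monotonicity. For the property ``the robber wins the one-cop game'' (equivalently, ``the graph is not cop-win''), I need: (a) a graph $G$ on which the robber wins, together with an edge whose addition yields a graph $G'$ on which the cop wins — this refutes edge-monotone increasing (actually it shows the robber-win property is not edge-monotone decreasing, hence the cop-win property is not edge-monotone increasing; I will phrase it carefully); and (b) a graph $H$ on which the robber wins, together with an edge whose deletion yields $H'$ on which the cop wins — this refutes the other direction. Since the statement is about ``a player winning'', and exactly one player wins each such game (cop-win graphs are exactly those reducible to a point by dismantling corners), refuting monotonicity for the robber-win property automatically refutes it for the cop-win property and vice versa, so it is enough to handle one property.

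First I would recall the Nowakowski--Winkler / Quilliot characterization: a reflexive graph is cop-win iff it is \emph{dismantlable}, i.e. one can iteratively delete \emph{dominated} vertices (a vertex $u$ dominated by $v$ when $\mathsf{N}(u)\subseteq\mathsf{N}(v)$) down to a single vertex. This gives a clean, checkable criterion for the small examples. For the increasing direction, I would take $G = C_4$, the $4$-cycle (reflexive): it has no dominated vertex, so it is not dismantlable and the robber wins (the standard ``stay diagonally opposite'' strategy, already mentioned in the paper for tandem-cops). Adding one chord gives $G'$, a $4$-cycle with a diagonal; now the two degree-$2$ vertices on the ``short'' side become dominated and $G'$ dismantles to a point, so the cop wins. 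Thus adding an edge turned a robber-win graph into a cop-win graph — the robber-win property is not preserved under edge addition, i.e. not edge-monotone increasing, and correspondingly the cop-win property is not edge-monotone decreasing.

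For the decreasing direction I need a robber-win graph with an edge whose removal makes it cop-win. A convenient choice is to start from a cop-win graph, such as a path or a triangle-with-a-pendant, that becomes $C_4$-like (robber-win) when an edge is \emph{added}; reading that pair backwards gives a robber-win graph from which deleting one edge produces a cop-win graph. Concretely, take $H$ to be $C_4$ again but now viewed as obtained from the ``paw''/``diamond'' by deletion: let $H = K_4$ minus a perfect matching is just $C_4$; better, let $H$ be the graph on vertices $\{a,b,c,d\}$ with edges $ab, bc, cd, da, ac$ (a $4$-cycle plus one chord) — wait, that is cop-win. Instead I take $H$ to be the $5$-cycle $C_5$, which is robber-win (no dominated vertices), and delete one edge to get the path $P_5$, which is cop-win (a path dismantles from either end). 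So deleting an edge from the robber-win $C_5$ yields the cop-win $P_5$: the robber-win property is not edge-monotone decreasing, and hence the cop-win property is not edge-monotone increasing either. Combining the two halves, the winning-player property is neither edge-monotone increasing nor edge-monotone decreasing.

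The only real work is verifying the cop-win / robber-win status of the four tiny graphs ($C_4$, $C_4$-plus-chord, $C_5$, $P_5$), which is immediate from the dismantlability characterization (or by direct inspection of the one-move condition $\mathsf{N}(y)\subseteq\mathsf{N}(x)$ already set up in the proof of Theorem~\ref{one-cop}), so there is no genuine obstacle; the main point to get right is the bookkeeping that ``the robber-win property fails increasing monotonicity'' is the same as ``the cop-win property fails decreasing monotonicity,'' so that a single pair of examples in each direction settles the proposition as stated for ``a player winning'' without ambiguity about which player.
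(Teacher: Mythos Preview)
Your proposal is correct and follows essentially the same approach as the paper: exhibit small explicit graphs witnessing failure of each monotonicity direction. The paper is marginally more economical, using $C_4$ alone as the base (adding a diagonal gives the cop-win diamond $D_4$; deleting an edge gives the cop-win path $P_4$) and arguing directly rather than via dismantlability, but the idea is identical.
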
 
\begin{proof}
    Without loss of generality, consider the case of the robber. Let us call the property of the robber having a winning strategy as $Win_{\mathsf{R}}$.
    Consider the  graph $C_4$, a cycle with 4 vertices. In this graph, the robber can always evade the cop by choosing to stay at a vertex that is diagonally opposite to that of cop. Thus, $C_4$ satisfies $Win_\mathsf{R}.$ Now, if we add a diagonal edge, we will get a diamond $D_4$, where if the cop chooses to stay at any of the vertices with degree 3, she can catch the robber in the next move. Thus, $D_4$ does not satisfy $Win_\mathsf{R}.$ Similarly, if we remove any edge from $C_4$, we will get a four-vertex path $P_4$. In $P_4$, the cop has a winning strategy.
    Hence, the property of the robber winning the game is neither edge-monotone increasing nor edge-monotone decreasing. Similarly, the property of the cop winning the game is neither edge-monotone increasing nor edge-monotone decreasing.
    \end{proof}

    Let us now get back to the extension axioms to see whether those ideas can shed some light on the threshold functions. As we have shown previously in Theorem \ref{one-cop}, $EA_{1, 2}:= \forall x \forall y \exists z (\neg(x= y) \rightarrow E(y,z) \wedge \neg E(x,z))$ is almost surely true,
and thus robber has a winning strategy in almost all possible one cop and one robber games played on random graphs with constant edge probability. Now, with the varying edge probability $p(N)$, 
the probability that $EA_{1, 2}$ is satisfied is given by $1- (1-p(N)(1-p(N)))^N.$ Thus, if $\lim \limits _{N \to \infty} (1-p(N)(1-p(N)))^N = 0$, then the robber has a winning strategy in $G(N,p(N))$ almost surely and also in $G(N,(1-p(N))).$ Thus, for the property $EA_{1, 2}$, we cannot have a threshold function, and so we cannot move forward as in Section 3. 
Even though the extension axioms may not work for getting threshold functions we have the following result which serves our purpose to some extent.

\begin{theorem}[\cite{ShelahJoel1988}]\label{varying1}
Any sentence in the first-order theory of graphs is 
either almost surely true or almost surely false in random graphs $G(N,p(N))$, with $p(N)$ satisfying any one of the following conditions.
\begin{enumerate}
    \item $p(N) \ll N^{-2}$
    \item for some integer $k$, $N^{-1-\frac{1}{k}} \ll p(N) \ll N^{-1-\frac{1}{k-1}}$
    \item $N^{-1-\epsilon} \ll p(N) \ll n^{-1}$ for all $\epsilon>0$
    \item $N^{-1} \ll p(N) \ll N^{-1}\log N$
\end{enumerate}
\end{theorem}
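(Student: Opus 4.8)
The statement to prove is Theorem~\ref{varying1}, which asserts that the first-order theory of graphs satisfies a zero--one law in $G(N,p(N))$ whenever $p(N)$ lies in one of the four listed ``safe'' ranges. Since the theorem is cited to \cite{ShelahJoel1988}, the realistic plan is not to reprove the Shelah--Spencer machinery in full but to sketch the argument in the same spirit as the proof of Theorem~\ref{fol}: identify an appropriate complete axiomatisation whose axioms hold almost surely in the relevant range, build a (unique, in the relevant sense) countable model, and then transfer completeness back to asymptotic probabilities via compactness.

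The plan is as follows. First I would treat the four ranges separately, since the ``almost sure theory'' differs from case to case. In range (1), $p(N)\ll N^{-2}$, a.a.s.\ the graph has no edges at all (the expected number of edges is $\binom{N}{2}p(N)\to 0$), so the almost sure theory is just the theory of the countably infinite edgeless graph, which is complete; the zero--one law is then immediate. Ranges (2) and (3) are the genuinely sparse ``irrational-exponent'' regimes: here I would introduce, in place of the extension axioms $EA_{m,n}$ of Theorem~\ref{fol}, the \emph{sparse extension axioms} that say, for every ``safe'' pair of finite graphs $(H,H')$ with $H\subseteq H'$ (safe meaning every intermediate subgraph has non-negative ``dimension'' $v - \alpha e$ for the appropriate real $\alpha$ determined by the exponent of $p(N)$), every embedded copy of $H$ extends to a copy of $H'$. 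A first-moment computation shows unsafe extensions a.a.s.\ do not over-occur and a second-moment computation shows safe extensions a.a.s.\ do occur; this is the analogue of the ``extension axioms are almost surely true'' step, and it is the technical heart. One then checks these axioms form a complete theory with a countable model, and concludes by compactness exactly as in Theorem~\ref{fol}: any sentence is decided by finitely many sparse extension axioms, each almost surely true, hence the sentence has asymptotic probability $0$ or $1$. Range (4), $N^{-1}\ll p(N)\ll N^{-1}\log N$, is handled by a separate but parallel argument: here the graph is a.a.s.\ connected-free of small cycles in a controlled way, and one uses the corresponding ``almost sure theory'' of sparse random graphs near the giant-component threshold; again completeness plus almost-sure truth of the generating axioms plus compactness gives the zero--one law.

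Concretely the steps, in order, would be: (i) fix a range and determine the relevant real parameter $\alpha$ (namely $\alpha = 1/k$ type bounds in case (2), $\alpha \to 1$ in case (3), etc.); (ii) define the notion of an $\alpha$-safe extension and write down the corresponding extension-axiom schema; (iii) prove each such axiom is almost surely true in $G(N,p(N))$ for that range, via first- and second-moment methods; (iv) show the schema (together with the axioms forbidding unsafe dense subgraphs) axiomatises a complete theory and exhibit a countable model, unique up to back-and-forth equivalence; (v) invoke compactness to conclude that every first-order sentence is entailed by finitely many of these almost-surely-true axioms, so is itself almost surely true or almost surely false. The transfer step (v) is verbatim the argument already given for Theorem~\ref{fol}.

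The main obstacle is step (iii) in ranges (2) and (3): unlike the constant-probability case, where the expected number of extensions grows polynomially and the variance is easily controlled, in the sparse regime the expected number of copies of a given extension can be bounded or even tend to a constant, so the second-moment method requires a careful case analysis over all ways two extension-copies can overlap, and the ``safe'' condition ($v-\alpha e\ge 0$ on every subconfiguration) is exactly what makes the overlap terms negligible. Getting the bookkeeping right here — and verifying that the chosen schema is genuinely complete, i.e.\ that no first-order sentence escapes being decided by it — is the delicate part, and it is precisely why the full proof is deferred to \cite{ShelahJoel1988}; in this paper I would state the result, give the above sketch, and refer the reader there for the moment computations.
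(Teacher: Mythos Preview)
Your proposal matches the paper's own treatment, which is itself only a one-sentence sketch: the paper says the proof idea is the same as for Theorem~\ref{fol} but with a more involved set of axioms in each case, and defers entirely to \cite{ShelahJoel1988}. Your outline (identify an almost-surely-true complete axiom schema per regime, build the countable model, transfer via compactness) is therefore already more detailed than what the paper provides, and its architecture is exactly right.

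One correction about the specific axioms you name for ranges (2)--(4). The $v-\alpha e$ ``safe extension'' machinery you invoke is the Shelah--Spencer technology for $p(N)=N^{-\alpha}$ with $0<\alpha<1$ irrational. All four ranges listed in Theorem~\ref{varying1} have exponent at least $1$, so $G(N,p(N))$ is a.a.s.\ extremely sparse (edgeless in range (1), a forest with components of size at most $k$ in range (2), a forest with arbitrarily large trees in range (3), and so on); the relevant axioms in these regimes are much simpler statements about which finite trees occur as components and with what multiplicity, not the dense-subextension schema. The paper in fact signals this in the proof of the very next result, Theorem~\ref{one-cop-vary}, where the axiom used for range (2) is ``there are at least $r$ components isomorphic to the tree $t$, for $t$ a tree on at most $k$ points.'' So your delicate second-moment overlap analysis is not what drives these four cases; the combinatorics here is lighter than you anticipate.
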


The proof idea is quite similar to that of Theorem \ref{fol}, the only difference being that the set of axioms that is considered in each of the above cases is somewhat more involved. For a detailed proof, see \cite{ShelahJoel1988}. Let us now apply this result to find the threshold functions corresponding to the existence of winning strategies in the single cop and single robber game. We use relevant axioms used in the proof of the theorem above.

\begin{theorem}\label{one-cop-vary}
In a one cop and one robber game played on a random graph $G(N, p(N))$ where $p(N)$ is the varying edge probability and $N \rightarrow \infty$, the robber almost always has a winning strategy, whenever any one of the four conditions of Theorem \ref{varying1} holds.
\end{theorem}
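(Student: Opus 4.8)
The plan is to exhibit one first-order sentence $\psi$ that (i) logically forces the robber to have a winning strategy in the one-cop game, and (ii) is almost surely true under each of the four hypotheses of Theorem~\ref{varying1}; the theorem then follows because the probability that the robber wins on $G(N,p(N))$ is at least $\mu_N(\psi)\to 1$. The key point is that in every one of the four regimes the edge probability stays well below the connectivity threshold: in case~1, $p(N)\ll N^{-2}\ll N^{-1}\log N$; in case~2, $p(N)\ll N^{-1-\frac{1}{k-1}}\ll N^{-1}$; in case~3, $p(N)\ll N^{-1}$; and case~4 is exactly $p(N)\ll N^{-1}\log N$. Hence $G(N,p(N))$ almost surely contains isolated vertices --- indeed many of them --- and I would take
\[
\psi\;:=\;\exists x\,\exists y\,\bigl(\neg(x=y)\;\wedge\;\forall u\,(E(x,u)\to u=x)\;\wedge\;\forall u\,(E(y,u)\to u=y)\bigr),
\]
the sentence stating that at least two distinct vertices are adjacent only to themselves; this is (implied by) one of the almost-sure axioms that the proof of Theorem~\ref{varying1} uses in these ranges.

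First I would show that $\psi$ forces a robber win. Since the robber controls his own initial position and the cop occupies only one vertex, $\psi$ lets the robber start on an isolated vertex $v$ that the cop does not occupy. The robber then never moves. Because the only edge incident to $v$ is the loop at $v$, no walk along edges that begins outside $\{v\}$ can reach $v$; as the cop starts elsewhere and travels only along edges, he never occupies $v$, so the robber evades forever. Thus $\psi$ implies that the robber has a winning strategy.

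Next I would verify $\mu(\psi)=1$ in each of the four cases. The expected number of isolated vertices of $G(N,p(N))$ equals $N(1-p(N))^{N-1}=\exp\bigl((1+o(1))\log N-(1+o(1))Np(N)\bigr)$, which tends to infinity precisely because $Np(N)\ll\log N$ in all four cases; a routine second-moment (Chebyshev) computation upgrades this to the statement that the number of isolated vertices tends to infinity almost surely, and in particular is at least $2$ almost surely, i.e.\ $\mu(\psi)=1$. Combining the two steps, the probability that the robber has a winning strategy on $G(N,p(N))$ is at least $\mu_N(\psi)\to 1$, which is the assertion of the theorem.

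The genuine work lies entirely in the probabilistic step: one must line up the four bounds of Theorem~\ref{varying1} against the isolated-vertex threshold $N^{-1}\log N$ and run a concentration argument rather than a bare first-moment one (the first moment alone would not exclude the count being $0$ with non-vanishing probability). The game-theoretic half is immediate once the robber has an unreachable isolated vertex at his disposal. It is worth flagging --- in the spirit of Proposition~\ref{monotone} and the earlier remark that the extension axioms do not yield threshold functions --- that ``the robber has a winning strategy'' is presumably not itself a first-order property, so the zero-one law of Theorem~\ref{varying1} cannot be applied to it directly; bounding it from below by the first-order sentence $\psi$ is what makes the argument work.
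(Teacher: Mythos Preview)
Your argument is correct and is in the same spirit as the paper's: both derive the robber's win from disconnectedness in the sub-$N^{-1}\log N$ regimes. The paper proceeds case by case, quoting the almost-sure axioms from the Shelah--Spencer proof of Theorem~\ref{varying1} (``no edges'' for case~1, and ``at least $r$ components isomorphic to a small tree $t$'' --- in particular $\varphi_{t_1,2}$, which is essentially your $\psi$ --- for cases~2--4). You instead fix the single sentence $\psi$ (``there are at least two isolated vertices''), observe that all four ranges satisfy $p(N)\ll N^{-1}\log N$, and verify $\mu(\psi)=1$ by a direct second-moment computation; this bypasses Theorem~\ref{varying1} altogether. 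What you gain is a uniform, self-contained argument that does not require knowing which axioms the Shelah--Spencer proof uses in each window; what the paper's route buys is that it stays faithful to the ``apply a zero-one law'' theme and defers all the probability to \cite{ShelahJoel1988}. Your closing observation that ``the robber wins'' is (apparently) not itself first-order, so one must sandwich it above a first-order sentence rather than invoke Theorem~\ref{varying1} directly, is exactly the manoeuvre the paper also makes, though it does not say so explicitly.
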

\begin{proof}
We prove the result as follows: (1) For $p(N) \ll N^{-2}$, consider the formula $\forall x \forall y \neg (E(x,y))$ which is almost surely true. Evidently, the robber has a winning strategy. (2) For $N^{-1-\frac{1}{k}} \ll p(N) \ll N^{-1-\frac{1}{k-1}}$, consider the formula $\varphi_{t,r}$ saying that: there are at least $r$ components $t$, $t$ being a tree on at most $k$ points. For example, if we take $t_1$ as the tree with one vertex, then the formula
$\varphi_{t_1,2}$ is given by $\exists x_1 \exists x_2\forall y(((y=x_1) \vee \neg (E(y,x_1))) \wedge ((y=x_2) \vee \neg (E(y,x_2))))$. Now, $\varphi_{t,r}$ is almost surely true for any $r$ and any such tree $t$, in particular $\varphi_{t,2}$. This implies that the robber has a winning strategy as there is more than one component, almost surely. (3) For $N^{-1-\epsilon} \ll p(N) \ll N^{-1}$ for all $\epsilon>0$, the proof is similar to the case above, based on the corresponding axioms discussed in \cite{ShelahJoel1988}. (4) For $N^{-1} \ll p(N) \ll N^{-1}\log N$, the proof is once again similar.

\end{proof}

Along similar lines we have the following result for the case of $k > 1$ cops and 1 robber. 

    \begin{theorem}\label{k-cop-vary}
In a $k >1$ cops and one robber game played on a random graph $G(N, p(N))$ where $p(N)$ is the varying edge probability and $N \rightarrow \infty$, the robber almost always has a winning strategy, whenever each of the four conditions of Theorem \ref{varying1} holds.
\end{theorem}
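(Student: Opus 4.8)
The plan is to adapt the argument of Theorem~\ref{one-cop-vary} more or less verbatim in spirit: extra cops do not help, because a robber who can sit on a vertex that no cop can ever reach wins no matter how many cops there are, provided there are enough such vertices for him to choose one that is not already occupied. The point is that in all four probability ranges of Theorem~\ref{varying1} the graph $G(N,p(N))$ lies well below the connectivity threshold, and in fact almost surely contains arbitrarily many isolated vertices; this is what we exploit.

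First I would fix the first-order sentence
\begin{equation*}
\psi_{k+1} := \exists x_1 \cdots \exists x_{k+1}\Bigl(\bigwedge_{1\le i<j\le k+1}\neg(x_i=x_j)\ \wedge\ \bigwedge_{i=1}^{k+1}\forall y\,\bigl((y=x_i)\vee\neg E(y,x_i)\bigr)\Bigr),
\end{equation*}
which asserts the existence of at least $k+1$ pairwise distinct isolated vertices; this is exactly the $r=k+1$ instance of the component-counting sentences $\varphi_{t_1,r}$ already used in the proof of Theorem~\ref{one-cop-vary}. The next step is to verify that $\psi_{k+1}$ is almost surely true under each of the four conditions. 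Under condition~(1), $p(N)\ll N^{-2}$, the expected number of (non-loop) edges tends to $0$, so $G(N,p(N))$ almost surely has no edges at all and $\psi_{k+1}$ holds trivially; indeed $\all x\all y\,\neg E(x,y)$, the sentence used before, already entails $\psi_{k+1}$. Under conditions~(2)--(4) one has $N\,p(N)/\log N \to 0$, so the expected number of isolated vertices is $\sim N e^{-N p(N)}\to\infty$; combining this first/second-moment estimate with the zero-one law of Theorem~\ref{varying1} (equivalently, reading off the relevant axioms from \cite{ShelahJoel1988}) yields $\mu(\psi_{k+1})=1$ in each of these ranges. No new logical ingredient beyond those used for Theorem~\ref{one-cop-vary} is needed.

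Finally I would read off the winning strategy. On every graph satisfying $\psi_{k+1}$ there are at least $k+1$ isolated vertices; since the $k$ cops occupy at most $k$ vertices once they have committed to their starting positions, the robber can place himself on an isolated vertex carrying no cop. As every move of a cop is along an edge and no edge is incident to that vertex, no cop can ever reach it, and since the underlying graph is reflexive the robber is allowed to stay put there for the whole infinite run. Hence the robber wins on almost every graph in each of the four ranges, which is the claim.

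I do not expect a genuine obstacle; the only step that needs care is checking, in ranges~(2)--(4), that Theorem~\ref{varying1} actually places $\psi_{k+1}$ on the ``almost surely true'' side rather than merely guaranteeing that its asymptotic probability is $0$ or $1$ --- and that is settled by the isolated-vertex count diverging, exactly as in the corresponding case of Theorem~\ref{one-cop-vary}. One should also make explicit the turn order of the game (the cops choose their positions first, then the robber), since that is what lets the robber dodge the at most $k$ cop-occupied isolated vertices.
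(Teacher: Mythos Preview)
Your proposal is correct and follows essentially the same approach as the paper: the paper too passes from the one-cop sentence $\varphi_{t_1,2}$ to its $k$-cop analogue $\varphi_{t_1,k+1}$, which is exactly your $\psi_{k+1}$, and argues that the robber can sit on one of the $k+1$ isolated vertices not initially occupied by a cop. If anything, your write-up is more explicit than the paper's, which simply says ``the proofs for all the cases are as above with necessary modifications with respect to the number $k$'' and exhibits the formula $\varphi_{t_1,k+1}$.
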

\begin{proof}
The proofs for all the cases are as above with necessary modifications with respect to the number $k$. As an example, for the second case, we need to consider $k + 1$ components instead of $2$ components that we did for the one-cop case. That is, we need to consider $\varphi_{t,k+1}$, and the proof would go through as earlier. To give an example as earlier, if $t_1$ denotes the tree with one vertex,  $\varphi_{t_1,k+1}$ is given by:  $\exists x_1 \exists x_2...\exists x_k\exists x_{k+1}\forall y(\bigwedge \limits_{i=1}^{k+1}((y=x_i) \vee \neg (E(y,x_i))))$. 
\end{proof}

\subsection{Other variants}

Similar results can be proved for the \emph{cops and robber game with traps} and \emph{cops and robber games with roadblocks} played on random graphs $G(N, p(N))$ with varying edge probability and $N \rightarrow \infty,$ and we leave it to the readers. The case of games with \emph{different edge sets} played on random graphs with varying edge probabilities is more involved as we see below. 

For the complementary edge sets, we basically consider a graph and its edge-complement graph. Without loss of generality, we assume that the robber is traveling through the original graph, and the cop is traveling through the complement. Evidently, the property $Win_{\mathsf{R}}$ is monotone increasing, which implies that it would always have a threshold function.
As in the constant probability case, the robber wins whenever the underlying graph satisfies the extension axiom: 
$\forall x \forall y \exists z (\neg(x=y) \rightarrow E(y,z) \wedge E(x,z)).$ Our aim is now to compute the threshold function. To this end, let us first introduce the following concepts that are discussed in \cite{threshold-ext}.

\begin{definition}[Rooted graph]
    A graph $H = (V_H, E_H)$ can be considered as a rooted graph $(H, R)$, where $R \subseteq V_H$ is called the root.
\end{definition}

\begin{definition}[Extension statement]
    A graph $G$ is said to satisfy Extension statement $(Ext(H, R))$ where $R=\{x_1, \ldots, x_r\}$ and $V_H = \{x_1, \ldots, x_r, y_1, \ldots, y_s\}$, if for all choices of $\{x'_1, \ldots, x'_r\} \subseteq V_G$, we can find $\{y'_1, \ldots, y'_s\} \subseteq V_G$ such that whenever $(x_i,y_j) \in E_H$, $(x'_i,y'_j) \in E_G$ and whenever $(y_i,y_j) \in E_H$, $(y'_i,y'_j) \in E_G$.
\end{definition}

We note that a graph $G$ satisfying the formula 
$\forall x\forall y \exists z (E(x,z) \wedge E(z,y))$ is the same as $G$ satisfying the extension statement $Ext(H, R)$ where $R=\{x,y\}$, $V_H = \{ x,y,z\}$ and $E_H = \{(x,z),(y,z)\}$. And as discussed in Theorem \ref{different}, whenever a graph satisfies $Ext(H, R)$, the robber has a winning strategy.

Some more notions from \cite{threshold-ext} are introduced for the sake of completion: (1) {\bf Subextension:} We call a rooted graph $(S,R)$ a subextension of a rooted graph $(H,R)$ if $S$ is an induced subgraph of $H$. It is proper if $V_S$ is a proper subset of $V_H$. (2) {\bf Density of a rooted graph:} Density of a rooted graph $dens(H,R)$ is given by $\frac{|E_H \setminus E_R|}{|V_H|}$. (3) {\bf Strictly balanced:} A rooted graph $(H,R)$ is called strictly balanced if for all subextensions $(S,R)$, $dens(S,R) < dens(H,R)$. (4) {\bf Maximal average degree:} A maximal average degree of a rooted graph $(H,R)$ is given by  $mad(H,R):= max\{dens(S,R) | (S,R)$ is a subextension of $(H,R)\}$ (5) {\bf Primal subextension:} We call  a subextension $(S,R)$ of $(H,R)$ a primal subextension if $dens(S,R)= mad(H,R)$. (6) {\bf Grounded subextension:} We call a subextension grounded if there is at least one edge between a root and a non-root vertex. All these concepts are used in the following result that provides us with the required threshold function.

\begin{theorem}[\cite{threshold-ext}] For a rooted graph $(H,R)$, the following holds:
    (i) If no primal subextension of $(H,R)$ is grounded, then $f(n)=n^{-\frac{1}{mad(H,R)}}$is a theshold function for the property $Ext(H,R).$
    (ii) If there are grounded primal subextensions and $s$ is the smallest value of $|E_S \setminus E_R|$ over all such subextensions $S,$ then $f(n) = n^{-\frac{1}{mad(H,R)}} (\log n)^{\frac{1}{s}}$ is a threshold function for $Ext(H,R).$
\end{theorem}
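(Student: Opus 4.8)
The property $Ext(H,R)$ only asserts that certain edges are \emph{present}, so it is edge-monotone increasing, and Theorem \ref{threshold} already guarantees that a threshold function exists; the content of the statement is to identify it. My plan is the classical first/second moment argument for extension statements.

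\textbf{Below the threshold (the $0$-statement).} Suppose $p(n) \ll f(n)$; I want $\mu_G(Ext(H,R)) = 0$. Pick a primal subextension $(S,R)$, so that $dens(S,R) = mad(H,R)$ (in case (ii), take the grounded one minimizing $|E_S \setminus E_R|$, which realizes the value $s$ in the statement). Since $S$ is an induced subgraph of $H$, every copy of $H$ rooted at a tuple $\vec{x}'$ restricts to a copy of $S$ rooted at $\vec{x}'$, so $Ext(H,R)$ implies $Ext(S,R)$ and it suffices to refute the latter. For a fixed root tuple $\vec{x}'$ let $X_{\vec{x}'}$ count its completions to a copy of $S$; then $\mathbb{E}[X_{\vec{x}'}] = \Theta(n^{|V_S\setminus R|}\,p^{|E_S\setminus E_R|})$. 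In case (i) this tends to $0$, so by Markov a fixed tuple already fails to extend and $\Pr[Ext(S,R)]\to 0$. In case (ii) $\mathbb{E}[X_{\vec{x}'}]$ need not vanish (it is only $O(\log n)$), so I would instead count the root tuples with no completion: a Poisson/Janson estimate gives $\Pr[X_{\vec{x}'} = 0] = n^{-o(1)}$, hence the expected number of such tuples is $n^{|R|-o(1)}\to\infty$, and a second moment (the events for distinct tuples being nearly independent) shows one exists almost surely. Either way $Ext(S,R)$, hence $Ext(H,R)$, fails almost surely.

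\textbf{Above the threshold (the $1$-statement).} Suppose $f(n) \ll p(n)$; I want that almost surely \emph{every} root tuple extends to a copy of $H$. Fix $\vec{x}'$, let $X = X_{\vec{x}'}$ count completions to copies of $H$, so $\mathbb{E}[X] = \Theta(n^{|V_H\setminus R|}\,p^{|E_H\setminus E_R|})$. The heart of the matter is the second moment: expanding $\mathbb{E}[X^2]$ by classifying ordered pairs of completions according to the induced subextension $(S,R)$ on which they coincide, the hypothesis $p \gg n^{-1/mad(H,R)}$ is exactly what makes every overlap term negligible against $\mathbb{E}[X]^2$, so $\mathbb{E}[X^2] = (1+o(1))\mathbb{E}[X]^2$ and Chebyshev gives $\Pr[X = 0] = o(1)$ for each fixed $\vec{x}'$. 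It remains to union-bound over the $n^{|R|}$ root tuples, which needs the quantitatively stronger $\Pr[X = 0] = o(n^{-|R|})$. When no primal subextension is grounded, the bottleneck in assembling a completion never passes through a root, and the bare rate $n^{-1/mad(H,R)}$ already forces this failure probability to be super-polynomially small; when there is a grounded primal subextension the bottleneck does involve the roots, and $p$ must be inflated by the factor $(\log n)^{1/s}$, with $s$ the least value of $|E_S\setminus E_R|$ over grounded primal subextensions, to push the per-tuple gap probability below $n^{-|R|}$ --- this is precisely the logarithmic correction of case (ii).

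\textbf{Main obstacle.} The delicate point is confined to the $1$-statement: upgrading the per-tuple bound from $\Pr[X = 0] = o(1)$ (Chebyshev) to the sharp tail $o(n^{-|R|})$ demanded by the union bound. This requires more than the vanilla second moment method --- a Janson-type inequality or a concentration bound for subgraph counts --- together with the combinatorial bookkeeping that identifies the genuine bottleneck (the primal subextension, and whether it is grounded) and fixes the exact logarithmic exponent $1/s$. The $0$-statement, by contrast, reduces cleanly to first- and second-moment estimates for the primal subextension; the complete argument is carried out in \cite{threshold-ext}.
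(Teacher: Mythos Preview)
The paper does not prove this theorem at all: it is stated with the attribution \cite{threshold-ext} and immediately applied, with no argument given. There is therefore nothing in the paper to compare your proposal against.

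For what it is worth, your sketch follows the standard first/second-moment template for extension statements (choose a primal subextension for the $0$-statement, control the variance of the completion count and union-bound over root tuples for the $1$-statement, with the grounded/ungrounded dichotomy governing whether a logarithmic correction is needed). That is indeed the approach of the source you and the paper both cite; but since the present paper treats the result as a black box, your write-up already goes well beyond what the paper itself contains.
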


We now have the following threshold for the cops and robber game with complementary edge sets.

\begin{theorem}
In a cops and robber game with complementary edge sets played on a random graph $G(N, p(N))$ where $p(N)$ is the varying edge probability and $N \rightarrow \infty$, the robber has a winning strategy almost surely, whenever  $N^{-\frac{3}{2}}(\log N)^{\frac{1}{2}} \ll p(N).$ Similarly, the cop has a winning strategy almost surely, whenever $p(N) \ll N^{-\frac{3}{2}}(\log N)^{\frac{1}{2}}.$ 
\end{theorem}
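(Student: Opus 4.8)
The plan is to identify the threshold claimed in the statement with the threshold function of the extension statement that governs the robber's escape in this variant, and then to handle the two regimes separately, using a direct sparse-graph argument for the one below the threshold. Recall from Theorem~\ref{different} and the discussion following it that in the complementary-edge game the robber wins whenever the robber's graph $G$ satisfies $Ext(H,R)$ with $R=\{x,y\}$, $V_H=\{x,y,z\}$ and $E_H=\{(x,z),(y,z)\}$, i.e.\ whenever every pair of vertices of $G$ has a common $G$-neighbour. So the two halves of the statement reduce to: (a) the threshold function of $Ext(H,R)$ for this particular $(H,R)$ is $N^{-3/2}(\log N)^{1/2}$; and (b) for $p(N)\ll N^{-3/2}(\log N)^{1/2}$ the cop actually has a winning strategy --- note that the threshold theorem only tells us $\mu_G(Ext(H,R))=0$ there, which is not in itself a cop strategy.

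For (a) I would compute the relevant parameters of the rooted graph $(H,R)$. Here $E_R=\varnothing$, so $|E_H\setminus E_R|=2$, and with the definition of $dens$ given above this yields $dens(H,R)=|E_H\setminus E_R|/|V_H|=2/3$; the only proper subextension is the bare root $(\{x,y\},R)$, of density $0$, so $(H,R)$ is strictly balanced, $mad(H,R)=2/3$, and $(H,R)$ is its own unique primal subextension. Since the edge $(x,z)$ joins the root vertex $x$ to the non-root vertex $z$, this primal subextension is grounded, so part~(ii) of the threshold theorem of \cite{threshold-ext} applies with $s=|E_H\setminus E_R|=2$, giving the threshold function $f(N)=N^{-1/mad(H,R)}(\log N)^{1/s}=N^{-3/2}(\log N)^{1/2}$. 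Hence when $N^{-3/2}(\log N)^{1/2}\ll p(N)$ we have $\mu_G(Ext(H,R))=1$, so almost surely every pair of vertices of $G$ has a common neighbour, and the robber almost surely wins.

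For (b) I would argue from the sparseness of $G$. Since $p(N)\ll N^{-3/2}(\log N)^{1/2}\ll N^{-1}\log N$, the edge probability lies well below the threshold for the first-order property ``$G$ contains an isolated vertex''; indeed the expected number of isolated vertices is $N(1-p(N))^{N-1}\to\infty$, so almost surely $G$ has such a vertex $v$ (in fact $G$ has only $o(N)$ edges, so almost all vertices are isolated). In the complementary-edge game the cop travels along the complement $G^{c}$, and any vertex isolated in $G$ is adjacent in $G^{c}$ to every other vertex; stationing the cop at $v$, the cop can step onto the robber's vertex in a single move, so the cop wins. Since the game is a finite perfect-information reachability/safety game and hence determined, this gives $\mu_G(Win_{\mathsf{R}})=0$, i.e.\ the cop almost surely wins whenever $p(N)\ll N^{-3/2}(\log N)^{1/2}$.

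The step I expect to be the main obstacle is (b): the extension-statement threshold theorem is one-sided for our purposes, since below the threshold it only says the robber's escape axiom fails almost surely, which by itself does not exhibit a winning cop strategy; one has to observe separately that in that regime $G$ is sparse enough --- in particular has isolated vertices, making $G^{c}$ nearly complete --- for the cop's mobility to win outright, and only then does the threshold of the derived property $Win_{\mathsf{R}}$ match that of $Ext(H,R)$. A minor bookkeeping point worth flagging is that the exponent $3/2$ (rather than $1/2$) emerges precisely because $dens(H,R)$ is computed with $|V_H|$ in the denominator as defined above; with that fixed, the computation of $mad(H,R)$ and $s$ is routine.
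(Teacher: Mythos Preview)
Your argument for the robber's half is exactly the paper's: you identify the same rooted graph $(H,R)$, compute $mad(H,R)=2/3$ and $s=2$, observe that the (unique) primal subextension is grounded, and read off the threshold $N^{-3/2}(\log N)^{1/2}$ from part~(ii) of the threshold theorem.

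For the cop's half you take a genuinely different route. The paper does not argue via sparseness; it only says the cop case ``can be shown similarly, with respect to the extension statement given by $\neg\varphi'$'' where $\varphi'=\exists x\,\forall y\,\forall z(\neg E(y,z)\vee\neg E(x,z))$, i.e.\ it intends to recast the cop's winning condition as another extension statement and re-apply the threshold theorem. Your approach instead exploits that $p(N)\ll N^{-3/2}(\log N)^{1/2}\ll N^{-1}\log N$ lies below the isolated-vertex threshold, so almost surely $G$ has an isolated vertex $v$, which is universal in $G^{c}$ and hence a one-move capture point for the cop. This is more elementary and more transparent than the paper's sketch: it avoids formulating a second extension statement, and it makes explicit the point you correctly flag as the main obstacle --- that failure of the robber's escape axiom below threshold is not by itself a cop strategy. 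The paper's route, if fully carried out, would have the advantage of staying entirely within the extension-statement machinery and would presumably identify the same threshold from the cop's side directly; your route trades that uniformity for a short concrete argument.
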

\begin{proof}
We prove below the result for the robber. In this case, we have the extension axiom \small{$\forall x\forall y \exists z (E(x,z) \wedge E(z,y))$}, and corresponding to the axiom, we have the extension statement $Ext(H, R)$ where $R=\{x,y\}$, $V_H = \{ x,y,z\}$ and $E_H = \{(x,z),(y,z)\}$. Note that in the extension statements, for all choices of root vertices, we get a choice of $z$ such that it is incident to the required edges. Now, $(H,R)$ has only one subextension, which is $(H,R)$ itself. It is primal as well as grounded as there is an edge between a root vertex and a non-root vertex, (e.g., $(x,z)$ and $(y,z)$). Now, $mad(H,R)= \frac{|E_H\setminus E_R|}{|V_H|}=\frac{2}{3}$ and $s=|E_H \setminus E_R|=2$. As there is a primal subextension which is grounded, using the previous theorem, we get the threshold function of $Ext(H,R)= N^{-\frac{3}{2}}(\log N)^{\frac{1}{2}}$. And, as a graph satisfying $Ext(H,R)$ implies that the robber has a winning strategy, $N^{-\frac{3}{2}}(\log N)^{\frac{1}{2}} \ll p(N)$ implies that the robber has a winning strategy in $ G(N,p(N))$ almost surely. The case for the cop can be shown similarly, with respect to the extension statement given by $\neg \varphi'$, where $\varphi'$ is given by $\exists x \forall y \forall z ( \neg E(y,z) \vee \neg E(x,z)).$ 
\end{proof}

In the proof above, we have extension statements for the extension axioms but not the earlier ones, e.g., the cases of $k\geq1$ cops and robber, as these statements do not preserve the property of non-adjacency of two vertices (note that, $\neg \varphi'$ above is also all about adjacency). 

Finally, we look into the game of \emph{tandem-cops} and robber. The tandem-cops have a winning strategy if the graph satisfies the extension axiom: $\forall x_1 \forall x_2 \forall y \exists z ((\neg (x_1 =x_2) \wedge \bigwedge \limits_{i=1,2}\neg (x_i=y)) \rightarrow (\bigwedge \limits_{i=1,2} \neg (x_i=z)\wedge \neg(y=z) \wedge E(x_1,z) \wedge E(y,z) ))$. This does not contain any non-adjacency condition. Thus, following the same extension statement idea, we have:

\begin{theorem}
In a tandem-cops and robber game played on a random graph $G(N, p(N))$ where $p(N)$ is the varying edge probability and $N \rightarrow \infty$, the cops have a winning strategy almost surely, whenever  $N^{-\frac{3}{2}}(\log N)^{\frac{1}{2}} \ll p(N)$.
\end{theorem}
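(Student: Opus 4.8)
The plan is to follow, essentially line for line, the argument just given for the complementary-edge-set game, since the condition under which the tandem-cops win is controlled by an extension statement of the very same shape. First I would recall from the proof of Theorem~\ref{tandem} the purely game-theoretic fact — which does not depend on the edge probability — that a pair of tandem-cops wins as soon as the underlying graph satisfies the extension axiom exhibited there, namely: for all distinct $x_1, x_2, y$ there is a vertex $z$, distinct from all of them, adjacent to both $x_1$ and $y$. Indeed, from a legal configuration with cops at adjacent vertices $x_1, x_2$ and robber at $y$, such a common neighbour $z$ lets the cops move the cop at $x_1$ to $z$ and the other cop onto $y$ (which is adjacent to $z$), capturing the robber. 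Since for the present direction we only need the one-way implication ``the graph satisfies this sentence $\Rightarrow$ the cops win'' — so that, in contrast with the complementary-edge-set case, no monotonicity of the winning property needs to be invoked — it suffices to prove that the sentence holds almost surely whenever $N^{-\frac{3}{2}}(\log N)^{\frac{1}{2}}\ll p(N)$.

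Next I would weaken this sentence to the simpler sufficient condition
$$\varphi \ :=\ \forall x\,\forall y\,\exists z\,\bigl(E(x,z)\wedge E(z,y)\bigr),$$
i.e.\ every two vertices have a common neighbour. If $G$ satisfies $\varphi$ then, from any legal configuration, a common neighbour $z$ of $x_1$ and $y$ exists and is automatically distinct from $x_1$ and from $y$ (the graph being loopless), so the capture move above goes through; and in the degenerate case $z = x_2$ a cop already occupies the common neighbour, so the same move still works. Hence $\varphi$ alone guarantees a tandem-cop win, and it remains only to show $\varphi$ holds almost surely in $G(N,p(N))$ when $N^{-\frac{3}{2}}(\log N)^{\frac{1}{2}}\ll p(N)$. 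But $\varphi$ is exactly the sentence treated in the preceding theorem on complementary edge sets: it says precisely that $G$ satisfies the extension statement $Ext(H,R)$ with $R=\{x,y\}$, $V_H=\{x,y,z\}$ and $E_H\setminus E_R=\{(x,z),(y,z)\}$, whose only subextension is $(H,R)$ itself, which is primal and grounded, with $mad(H,R)=\frac{|E_H\setminus E_R|}{|V_H|}=\frac{2}{3}$ and $s=|E_H\setminus E_R|=2$.

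The conclusion is then an immediate appeal to the threshold theorem for extension statements of \cite{threshold-ext}, in the grounded-primal-subextension case: the threshold function of $Ext(H,R)$ is $f(N)=N^{-1/mad(H,R)}(\log N)^{1/s}=N^{-\frac{3}{2}}(\log N)^{\frac{1}{2}}$, so the hypothesis $N^{-\frac{3}{2}}(\log N)^{\frac{1}{2}}\ll p(N)$ forces $Ext(H,R)$, hence $\varphi$, hence a win for the tandem-cops, to hold almost surely — which is the assertion.

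The step that carries all the content is the reduction in the second paragraph: that the simple common-neighbour sentence $\varphi$ already suffices for a tandem-cop victory, which amounts to discarding the ``dummy'' second-cop quantifier $x_2$ and doing the loopless-graph bookkeeping that makes the witness $z$ automatically distinct from the vertices it must avoid. No real calculation is involved beyond that; once the reduction is granted, everything is a verbatim re-run of the complementary-edge-set proof with the identical extension statement $Ext(H,R)$, and it transfers unchanged.
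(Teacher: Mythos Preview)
Your proposal is correct and follows essentially the same route the paper sketches: the paper's entire ``proof'' is the one-line remark preceding the theorem, namely that the tandem-cop extension axiom contains no non-adjacency clauses and therefore the extension-statement threshold machinery of \cite{threshold-ext} applies verbatim, yielding the same $Ext(H,R)$ and the same threshold $N^{-3/2}(\log N)^{1/2}$ as in the complementary-edge-set case. Your write-up is in fact more careful than the paper's, since you explicitly justify dropping the dummy quantifier $x_2$ and the distinctness clauses (via looplessness) to reduce to the identical sentence $\varphi = \forall x\,\forall y\,\exists z\,(E(x,z)\wedge E(z,y))$, and you note that only the one-way implication is needed so no monotonicity argument is required; the paper leaves all of this implicit.
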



\section{Concluding remarks}

In conclusion, we can say that in accordance with our intuition, in almost all cases, it is possible for the robber to hide in large random graphs. The cops have a better chance in capturing the robber if they move in a synchronous way so that the distance between them always remains less than two units. In fact, the tandem-cops, who can move two units at a time, move twice the speed of the robber. If the winning condition is edge-monotone, we get a proper threshold function. The question remains for the non-monotone cases, which we would like to work on in the future. That all these results can be shown as an application of first order theory of random graphs gives us the added benefit of exploring the inherent connection between logic, probability, games and combinatorics.


\section*{Acknowledgements}

 The authors thank Johan van Benthem for a discussion on the relevant results of sabotage games, which initiated the work on this paper. They also thank the TARK 2025 reviewers for their detailed comments which helped to improve the paper.

\bibliographystyle{eptcs}
\bibliography{references}

\begin{thebibliography}{10}
\providecommand{\bibitemdeclare}[2]{}
\providecommand{\surnamestart}{}
\providecommand{\surnameend}{}
\providecommand{\urlprefix}{Available at }
\providecommand{\url}[1]{\texttt{#1}}
\providecommand{\href}[2]{\texttt{#2}}
\providecommand{\urlalt}[2]{\href{#1}{#2}}
\providecommand{\doi}[1]{doi:\urlalt{https://doi.org/#1}{#1}}
\providecommand{\eprint}[1]{arXiv:\urlalt{https://arxiv.org/abs/#1}{#1}}
\providecommand{\bibinfo}[2]{#2}

\bibitemdeclare{article}{AIGNER1984}
\bibitem{AIGNER1984}
\bibinfo{author}{M.~\surnamestart Aigner\surnameend} \&
  \bibinfo{author}{M.~\surnamestart Fromme\surnameend} (\bibinfo{year}{1984}):
  \emph{\bibinfo{title}{A game of cops and robbers}}.
\newblock {\slshape \bibinfo{journal}{Discrete Applied Mathematics}}
  \bibinfo{volume}{8}(\bibinfo{number}{1}), pp. \bibinfo{pages}{1--12},
  \doi{10.1016/0166-218X(84)90073-8}.

\bibitemdeclare{inproceedings}{vanbenthem2005}
\bibitem{vanbenthem2005}
\bibinfo{author}{J.~\surnamestart van Benthem\surnameend}
  (\bibinfo{year}{2005}): \emph{\bibinfo{title}{An Essay on Sabotage and
  Obstruction}}.
\newblock In \bibinfo{editor}{D.~\surnamestart Hutter\surnameend} \&
  \bibinfo{editor}{W.~\surnamestart Stephan\surnameend}, editors: {\slshape
  \bibinfo{booktitle}{Mechanizing Mathematical Reasoning: Essays in Honor of
  J{\"o}rg H. Siekmann on the Occasion of His 60th Birthday}},
  \bibinfo{publisher}{Springer}, pp. \bibinfo{pages}{268--276},
  \doi{10.1007/978-3-540-32254-2_16}.

\bibitemdeclare{inproceedings}{graphgame}
\bibitem{graphgame}
\bibinfo{author}{J.~\surnamestart van Benthem\surnameend} \&
  \bibinfo{author}{F.~\surnamestart Liu\surnameend} (\bibinfo{year}{2020}):
  \emph{\bibinfo{title}{Graph Games and Logic Design}}.
\newblock In \bibinfo{editor}{F.~\surnamestart Liu\surnameend},
  \bibinfo{editor}{H.~\surnamestart Ono\surnameend} \&
  \bibinfo{editor}{J.~\surnamestart Yu\surnameend}, editors: {\slshape
  \bibinfo{booktitle}{Knowledge, Proof and Dynamics}}, \bibinfo{series}{Logic
  in {{Asia}}: {{Studia Logica Library}}}, \bibinfo{publisher}{Springer}, pp.
  \bibinfo{pages}{125--146}, \doi{10.1007/978-981-15-2221-5_7}.

\bibitemdeclare{article}{bollobas}
\bibitem{bollobas}
\bibinfo{author}{B.~\surnamestart Bollob\'{a}s\surnameend},
  \bibinfo{author}{G.~\surnamestart Kun\surnameend} \&
  \bibinfo{author}{I.~\surnamestart Leader\surnameend} (\bibinfo{year}{2013}):
  \emph{\bibinfo{title}{Cops and robbers in a random graph}}.
\newblock {\slshape \bibinfo{journal}{Journal of Combinatorial Theory, Series
  B}} \bibinfo{volume}{103}(\bibinfo{number}{2}), pp.
  \bibinfo{pages}{226--236}, \doi{10.1016/j.jctb.2012.10.002}.

\bibitemdeclare{article}{threshold}
\bibitem{threshold}
\bibinfo{author}{B.~\surnamestart Bollob{\'a}s\surnameend} \&
  \bibinfo{author}{A.G. \surnamestart Thomason\surnameend}
  (\bibinfo{year}{1987}): \emph{\bibinfo{title}{Threshold functions}}.
\newblock {\slshape \bibinfo{journal}{Combinatorica}} \bibinfo{volume}{7}, pp.
  \bibinfo{pages}{35--38}, \doi{10.1007/BF02579198}.

\bibitemdeclare{book}{cop-robber-book}
\bibitem{cop-robber-book}
\bibinfo{author}{A.~\surnamestart Bonato\surnameend} \& \bibinfo{author}{R.~J.
  \surnamestart Nowakowski\surnameend} (\bibinfo{year}{2011}):
  \emph{\bibinfo{title}{The Game of Cops and Robbers on Graphs}}.
\newblock {\slshape \bibinfo{series}{The Student Mathematical
  Library}}~\bibinfo{volume}{61}, \bibinfo{publisher}{AMS},
  \bibinfo{address}{Providence}, \doi{10.1090/stml/061}.

\bibitemdeclare{article}{p-e1}
\bibitem{p-e1}
\bibinfo{author}{T.H. \surnamestart Chung\surnameend}, \bibinfo{author}{G.A.
  \surnamestart Hollinger\surnameend} \& \bibinfo{author}{V.~\surnamestart
  Isler\surnameend} (\bibinfo{year}{2011}): \emph{\bibinfo{title}{Search and
  pursuit-evasion in mobile robotics}}.
\newblock {\slshape \bibinfo{journal}{Autonomous Robots}}
  \bibinfo{volume}{31}(\bibinfo{number}{4}), pp. \bibinfo{pages}{299--316},
  \doi{10.1007/s10514-011-9241-4}.

\bibitemdeclare{article}{traps}
\bibitem{traps}
\bibinfo{author}{N.~\surnamestart Clarke\surnameend} \&
  \bibinfo{author}{R.~\surnamestart Nowakowski\surnameend}
  (\bibinfo{year}{2001}): \emph{\bibinfo{title}{Cops, robber and traps}}.
\newblock {\slshape \bibinfo{journal}{Utilitas Mathematica}}
  \bibinfo{volume}{60}, pp. \bibinfo{pages}{91--98}.

\bibitemdeclare{article}{erdos1959random}
\bibitem{erdos1959random}
\bibinfo{author}{Paul \surnamestart Erd{\H{o}}s\surnameend} \&
  \bibinfo{author}{Alfr{\'e}d \surnamestart R{\'e}nyi\surnameend}
  (\bibinfo{year}{1959}): \emph{\bibinfo{title}{On Random Graphs}}.
\newblock {\slshape \bibinfo{journal}{Publicationes Mathematicae Debrecen}}
  \bibinfo{volume}{6}, pp. \bibinfo{pages}{290--297},
  \doi{10.5486/PMD.1959.6.3-4.12}.

\bibitemdeclare{article}{Rfagin1976}
\bibitem{Rfagin1976}
\bibinfo{author}{Ronald \surnamestart Fagin\surnameend} (\bibinfo{year}{1976}):
  \emph{\bibinfo{title}{Probabilities on Finite Models}}.
\newblock {\slshape \bibinfo{journal}{The Journal of Symbolic Logic}}
  \bibinfo{volume}{41}, pp. \bibinfo{pages}{50--58}, \doi{10.2307/2272945}.

\bibitemdeclare{article}{gilbert}
\bibitem{gilbert}
\bibinfo{author}{E.N. \surnamestart Gilbert\surnameend} (\bibinfo{year}{1959}):
  \emph{\bibinfo{title}{Random Graphs}}.
\newblock {\slshape \bibinfo{journal}{The Annals of Mathematical Statistics}}
  \bibinfo{volume}{30}(\bibinfo{number}{4}), pp. \bibinfo{pages}{1141--1144},
  \doi{10.1214/aoms/1177706098}.

\bibitemdeclare{article}{p-e2}
\bibitem{p-e2}
\bibinfo{author}{M.W. \surnamestart Hasan\surnameend} \& \bibinfo{author}{L.G.
  \surnamestart Ibrahim\surnameend} (\bibinfo{year}{2024}):
  \emph{\bibinfo{title}{A pursuit-evasion game robot controller design based on
  a neural network with an improved optimization algorithm}}.
\newblock {\slshape \bibinfo{journal}{Results in Control and Optimization}}
  \bibinfo{volume}{17}, p. \bibinfo{pages}{100503},
  \doi{10.1016/j.rico.2024.100503}.

\bibitemdeclare{article}{LHS-journal}
\bibitem{LHS-journal}
\bibinfo{author}{D.~\surnamestart Li\surnameend},
  \bibinfo{author}{S.~\surnamestart Ghosh\surnameend},
  \bibinfo{author}{F.~\surnamestart Liu\surnameend} \&
  \bibinfo{author}{Y.~\surnamestart Tu\surnameend} (\bibinfo{year}{2023}):
  \emph{\bibinfo{title}{A simple logic of the hide and seek game}}.
\newblock {\slshape \bibinfo{journal}{Studia Logica}} \bibinfo{volume}{111},
  pp. \bibinfo{pages}{821--853}, \doi{10.1007/s11225-023-10039-4}.

\bibitemdeclare{book}{fmt}
\bibitem{fmt}
\bibinfo{author}{Leonid \surnamestart Libkin\surnameend}
  (\bibinfo{year}{2010}): \emph{\bibinfo{title}{Elements of Finite Model
  Theory}}.
\newblock \bibinfo{publisher}{Springer}, \doi{10.1007/978-3-662-07003-1}.

\bibitemdeclare{inproceedings}{sabotage-random}
\bibitem{sabotage-random}
\bibinfo{author}{K.~\surnamestart Mierzewski\surnameend}
  (\bibinfo{year}{2025}): \emph{\bibinfo{title}{When random graphs are safe for
  travel:a note on the sabotage game}}.
\newblock In \bibinfo{editor}{J.~\surnamestart van Benthem\surnameend} \&
  \bibinfo{editor}{F.~\surnamestart Liu\surnameend}, editors: {\slshape
  \bibinfo{booktitle}{Graph Games and Logic Design: Recent Developments and
  Future Directions}}, \bibinfo{publisher}{Springer},
  \doi{10.1007/978-981-15-2221-5_7}.

\bibitemdeclare{article}{NOWAKOWSKI1983}
\bibitem{NOWAKOWSKI1983}
\bibinfo{author}{R.~\surnamestart Nowakowski\surnameend} \&
  \bibinfo{author}{P.~\surnamestart Winkler\surnameend} (\bibinfo{year}{1983}):
  \emph{\bibinfo{title}{Vertex-to-vertex pursuit in a graph}}.
\newblock {\slshape \bibinfo{journal}{Discrete Mathematics}}
  \bibinfo{volume}{43}(\bibinfo{number}{2}), pp. \bibinfo{pages}{235--239},
  \doi{10.1016/0012-365X(83)90160-7}.

\bibitemdeclare{phdthesis}{quilliot1978}
\bibitem{quilliot1978}
\bibinfo{author}{A.~\surnamestart Quilliot\surnameend} (\bibinfo{year}{1978}):
  \emph{\bibinfo{title}{Jeux et pointes fixes sur les graphes}}.
\newblock Ph.D. thesis, \bibinfo{school}{Universit{\'e} de Paris VI}.

\bibitemdeclare{article}{ShelahJoel1988}
\bibitem{ShelahJoel1988}
\bibinfo{author}{S.~\surnamestart Shelah\surnameend} \& \bibinfo{author}{J.H.
  \surnamestart Spencer\surnameend} (\bibinfo{year}{1988}):
  \emph{\bibinfo{title}{Zero-one laws for sparse random graphs}}.
\newblock {\slshape \bibinfo{journal}{Journal of the American Mathematical
  Society}} \bibinfo{volume}{1}, pp. \bibinfo{pages}{97--115},
  \doi{10.1090/S0894-0347-1988-0924703-8}.

\bibitemdeclare{article}{threshold-ext}
\bibitem{threshold-ext}
\bibinfo{author}{J.~\surnamestart Spencer\surnameend} (\bibinfo{year}{1990}):
  \emph{\bibinfo{title}{Threshold functions for extension statements}}.
\newblock {\slshape \bibinfo{journal}{J. Comb. Theory Ser. A}}
  \bibinfo{volume}{53}(\bibinfo{number}{2}), p. \bibinfo{pages}{286–305},
  \doi{10.1016/0097-3165(90)90061-Z}.

\bibitemdeclare{article}{stojakovi2006}
\bibitem{stojakovi2006}
\bibinfo{author}{M.~\surnamestart Stojakovi{\'c}\surnameend} \&
  \bibinfo{author}{T.~\surnamestart Szab{\'o}\surnameend}
  (\bibinfo{year}{2006}): \emph{\bibinfo{title}{Positional games on random
  graphs}}.
\newblock {\slshape \bibinfo{journal}{Random Structures \& Algorithms}}
  \bibinfo{volume}{26}, \doi{10.1002/rsa.20059}.

\end{thebibliography}

\end{document}